\theoremstyle{plain}
\newtheorem{proposition}{Proposition}[section]
\newtheorem{lemma}[proposition]{Lemma}
\theoremstyle{definition}
\newtheorem{definition}[proposition]{Definition}
\theoremstyle{remark}
\newtheorem{remark}[proposition]{Remark}
\newcommand{\rhs}{r.h.s.\ }
\newcommand{\lhs}{l.h.s.\ }
\newcommand{\wrt}{w.r.t.\ }
\newcommand{\cf}{cf.\ }
\newcommand{\ud}{\mathrm{d}}
\newcommand{\del}{\partial}
\DeclareMathOperator{\supp}{supp}
\DeclareMathOperator{\Sym}{Sym}
\newcommand{\betrag}[1]{{\lvert #1 \rvert}}
\newcommand{\R}{\mathbb{R}}
\newcommand{\C}{\mathbb{C}}
\newcommand{\E}{\mathfrak{E}}
\newcommand{\F}{\mathfrak{F}}
\newcommand{\A}{\mathfrak{A}}
\newcommand{\D}{\mathfrak{D}}
\newcommand{\skal}[2]{\langle #1 , #2 \rangle}
\newcommand{\1}{\mathbbm{1}}
\newcommand{\diag}{\mathcal{D}}
\newcommand{\Spin}{\mathrm{Spin}}
\newcommand{\WDp}[1]{\colon \negthickspace #1 \! \colon \negthickspace }
\DeclareMathOperator{\WF}{WF}
\DeclareMathOperator{\Had}{Had}
\DeclareMathOperator{\tr}{tr}
\DeclareMathOperator{\Diag}{diag}
\newcommand{\ret}{{\mathrm{ret}}}
\newcommand{\adv}{{\mathrm{adv}}}
\newcommand{\reg}{{\mathrm{reg}}}
\newcommand{\loc}{{\mathrm{loc}}}
\newcommand{\ia}{{\mathrm{int}}}
\newcommand{\os}{{\mathrm{o.s.}}}
\newcommand{\CatVec}{\mathbf{Vec}}
\newcommand{\CatAlg}{\mathbf{Alg}}
\newcommand{\CatGSpMan}{\mathbf{GSpMan}}
\newcommand{\g}{\mathfrak{g}}
\newcommand{\T}{\mathfrak{T}}
\DeclareMathOperator{\ad}{ad}
\newcommand{\Lie}{\mathcal{L}}
\newcommand{\TO}{\mathcal{T}}
\newcommand{\Ret}{\mathcal{R}}
\newcommand{\Oo}{\mathcal{O}}
\begin{document}

\title{Locally covariant charged fields and background independence}
\author{Jochen Zahn \\  \\ Fakult\"at f\"ur Physik, Universit\"at Wien, \\ Boltzmanngasse 5, 1090 Wien, Austria. \\ jochen.zahn@univie.ac.at}

\date{\today}

\maketitle

\begin{abstract}
We discuss gauge background independence at the example of the charged Dirac field. We show that a perturbative version of background independence, termed perturbative agreement by Hollands and Wald, can be fulfilled, and discuss some of its consequences.
\end{abstract}

\section{Introduction}

The framework of locally covariant field theory \cite{BrunettiFredenhagenVerch} has proved very fruitful for quantum field theory (QFT) on curved spacetimes. The central idea is to define a quantum field theory simultaneously on all spacetimes, in a coherent way. Given a locally covariant field theory specified by a Lagrangean $L$, one may then wonder about background independence, i.e., is the field theory on a spacetime $M$ for the Lagrangean $L$ in some sense equivalent to a field theory defined on $M'$, if one adds $L_M - L_{M'}$ as an interaction term to the latter. This question was investigated for the scalar field in \cite{HollandsWaldStress}, and the requirement of background independence was formulated as the \emph{principle of perturbative agreement}. It takes the form of a renormalization condition, and it was shown that this principle can indeed be fulfilled for spacetime dimension $n>2$.

Recently, it was proposed to generalize the framework of locally covariant field theory to also accommodate for fields charged under a gauge group, in the presence of a background connection \cite{LocCovDirac}. In this setting, the background connection and gauge transformations are treated on equal footing with the background metric and its isometries. It is then natural to ask for background independence \wrt changes in the background connection.\footnote{I am very grateful to K.~Rejzner for proposing this question and helpful discussions on this topic.} The main result of the present work is that this background independence can indeed be fulfilled for the Dirac field for spacetime dimesion $n\leq 4$.\footnote{We do not expect obstructions to generalize this to arbitrary dimensions.} As a byproduct, we formulate the principle of perturbative agreement in the framework of locally covariant field theory, combined with that of \emph{perturbative algebraic QFT} \cite{BDF09}, i.e., in terms of functionals and natural transformations. Most of our results are in close analogy to those of \cite{HollandsWaldStress}, but we correct some minor mistakes in the proofs given there. We also present an application of perturbative agreement: We show that the fermionic contribution to the one-loop renormalization group flow of Yang--Mills theories can be obtained from the nontrivial scaling of Wick squares, i.e., of the parametrix. This is in the spirit of the background field method and shows the connection to heat kernel methods.

The article is structured as follows: In the next section, we introduce the setup. We review the definition of the charged locally covariant Dirac field given in \cite{LocCovDirac}. Subsection~\ref{sec:BackgroundVariation} deals with the question of how to relate functionals defined on different backgrounds, an issue that is crucial for our discussion. In Section~\ref{sec:PPA}, we first formulate the principle of perturbative agreement. Then, in Subsection~\ref{sec:CurrentConservation}, we show that if the principle of perturbative agreement for variations of the background connection is fulfilled, then the current is conserved, for semisimple gauge groups. We also show that current conservation can be ensured by a suitable choice of the parametrix. The background field method is discussed in Subsection~\ref{sec:BFM}. Finally, in Subsection~\ref{sec:FulfillmentPPA} we show that the principle of perturbative agreement can be fulfilled by a suitable redefinition of time-ordered products. In an appendix, we prove a technical lemma used in the proof of the fulfillment of perturbative agreement.

\section{Setup}

We review the definition of the charged locally covariant Dirac field given in \cite{LocCovDirac}. For further details, we refer to this reference and the ones cited therein, especially \cite{SandersDirac} for a detailed description of the definition of the spinor and Dirac bundles.

In the following, $G$ is a compact Lie group with Lie algebra $\g$. The symbol $\doteq$ stands for the definition of the left hand side by the right hand side. $\Gamma^\infty_{(c)}(M, B)$ denotes the space of (compactly supported) smooth section of the bundle $B$ over $M$. The causal future/past of a region $K$ is denoted by $J^\pm(K)$. The total diagonal of $M^k$ is denoted by $\diag_k$.

We use the following categories:
\begin{description}
\item[$\CatVec_{(i)}$:] The objects are locally convex vector spaces over $\C$. The morphisms are continuous linear (injective) maps.
\item[$\CatAlg$:] The objects are topological unital $*$-algebras. The morphisms are continuous injective $*$-algebra homomorphisms.
\item[$\CatGSpMan$:] The objects are quintuples $(SM, P, \bar g, \bar A, \bar m)$, where $SM$ is a spin structure over $(M,\bar g)$, which is an oriented, time-oriented globally hyperbolic $n$-dimensional manifold. $P$ is a principal $G$ bundle over $M$, $\bar A$ a connection on $P$, and $\bar m \in C^\infty(M)$. A morphism $\chi: (SM, P, \bar g, \bar A, \bar m) \to (SM', P', \bar g', \bar A', \bar m')$ is given by $(\chi_{SM}, \chi_P)$, where $\chi_{SM (P)}$ is a principal $\Spin_0$ ($G$) bundle morphism. $\chi_{SM}$ and $\chi_P$ cover the same orientation, time-orientation and causality preserving isometric embedding $\psi: (M, \bar g) \to (M', \bar g')$, which is a diffeomorphism on its range. Furthermore, $\pi'_S \circ \chi_{SM} = \psi_* \circ \pi_S$, where $\pi_S$ is the spin projection from $SM$ to the (time-) oriented frame bundle $FM$, and $\bar A = \chi_P^* \bar A'$, $\bar m = \psi^* \bar m'$.
\end{description}

In the following, the background fields $(\bar g, \bar A, \bar m)$ are often subsumed in the symbol $\bar X$. 

We recall that the spin connection on $SM$ and the connection $\bar A$ on $P$ induce a unique connection on the principal $\Spin_0 \times G$ bundle $SM + P$ over $M$, which is obtained by taking the direct product bundle $SM \times P$ and restricting to the diagonal \cite{KobayashiNomizu}.

Given a finite-dimensional representation $\rho$ of $G$ on a $\C$ vector space $V$, we define the Dirac bundle
\[
 D_\rho M \doteq (SM + P) \times_{(\sigma, \rho)} (\C^{2^{[n/2]}} \otimes V),
\] 
where 
$\sigma$ is the spinor representation. The dual bundle is denoted by $D_\rho^* M$ and the double Dirac bundle is the direct sum
\[
 D_\rho^\oplus M \doteq D_\rho M \oplus D_\rho^* M.
\]
Then we have a contravariant functor $\E^\oplus$ from $\CatGSpMan$ to $\CatVec$, by assigning to $(SM, P)$ the vector space of smooth sections,
\[
 \E^\oplus(SM, P) \doteq \Gamma^\infty(M, D_\rho^\oplus M),
\]
and to a morphism of $\CatGSpMan$ the pullback. Similarly, we define $\D^\oplus$ as a covariant functor $\CatGSpMan$ to $\CatVec_i$, by assigning to $(SM, P)$ the vector space of compactly supported smooth sections,
\[
 \D^\oplus(SM, P) \doteq \Gamma^\infty_c(M, D_\rho^\oplus M),
\]
and to a morphism of $\CatGSpMan$ the pushforward. Introducing a sesquilinear form on $V$ such that $\rho$ is unitary, one obtains a conjugation $+: V \to V^*$. Together with the Dirac conjugation, this induces conjugations $^+: D_\rho M \to D_\rho^* M$, $^+: D_\rho^* M \to D_\rho M$, which in turn yields a conjugation of $D_\rho^\oplus M$ by
\[
 (u,v)^* = (v^+, u^+),
\]
where $u \in D_\rho M|_x$, $v \in D_\rho^* M|_x$ for some $x \in M$. Using this pointwise definition, one defines the involution $^*$ on $\D^\oplus(SM, P)$ and  $\E^\oplus(SM, P)$.

On the vector bundle $D_\rho^\oplus M$, there is a bundle metric, induced by the pairing
\[
 \skal{[p, (y \otimes v, y' \otimes v')]}{[p, (z \otimes w, z' \otimes w')]} \doteq \skal{y'}{z} \skal{v'}{w} + \skal{z'}{y} \skal{w'}{v},
\]  
where $p \in SM+P$, $v, w \in V$, $y,z \in \C^{2^{[n/2]}}$ and the primed elements in the corresponding duals. As usual, the square brackets denote the equivalence class. This induces a pairing $\E^\oplus(SM, P) \times \E^\oplus(SM, P) \to C^\infty(M)$.

The connection on $SM + P$ induces a covariant derivative $\bar \nabla^\oplus = \bar \nabla \oplus \bar \nabla^*$ on the associated vector bundle $\E^\oplus(SM, P)$, which in turn induces the double Dirac operator
\[
 D^\oplus \doteq D \oplus - D^* \doteq (- \gamma^\mu \bar \nabla_\mu + \bar m) \oplus (- \gamma^\mu \bar \nabla^*_\mu - \bar m),
\]
where $\bar m$ is the smooth function specified in the object of $\CatGSpMan$. There is a corresponding causal propagator $S^\oplus = S^\oplus_\ret - S^\oplus_\adv$, where $S^\oplus_{\ret/\adv}$ is the retarded/advanced propagator. The double Dirac operator $D^\oplus$, and hence also $S^\oplus_{\ret/\adv}$, anticommutes with the involution.

In order to describe the coupling of Dirac fields to gauge fields, we also introduce the following vector bundles over $M$:
\begin{align*}
 E^0 & \doteq (SM + P) \times_{(\1, \ad)} \g, & E^k & \doteq E^0 \otimes \Omega^k(M).
\end{align*}
Here $\1$ denotes the trivial representation of $\Spin_0$ and in the definition of $E^k$, we take the tensor product of vector bundles. The representation $\rho$ and the spin projection induces an action of $E^0$ and $E^1$ on sections of the Dirac bundle (and hence also on the double Dirac bundle), fiberwisely given by
\begin{align*}
 ([p, \xi], [p, z \otimes v]) & \mapsto [p, z \otimes \rho(\xi) v], \\
 ([p, \xi] \otimes \omega, [p, z \otimes v]) & \mapsto [p, \omega_\mu \gamma^\mu z \otimes \rho(\xi) v].
\end{align*}
There is also a product $\wedge: \Gamma^\infty(M, E^k) \times \Gamma^\infty(M, E^l) \to \Gamma^\infty(M, E^{k+l})$, fiberwisely defined by
\begin{multline*}
 [p, \xi] \otimes \omega \wedge [p, \eta] \otimes \nu \doteq [p, [\xi, \eta]] \otimes \omega \wedge \nu, \\
 p \in SM+P, \xi, \eta \in \g, \omega \in \Omega_{\pi(p)}^k, \nu \in \Omega_{\pi(p)}^l.
\end{multline*}
Finally, there is map $\bar \ud: \Gamma^\infty(M, E^k) \to \Gamma^\infty(M, E^{k+1})$ given by
\[
 \bar \ud ( a \Xi ) = \bar \nabla_\mu a \ud x^\mu \wedge \Xi + a \ud \Xi, \qquad a \in \Gamma^\infty(M, E^0), \Xi \in \Gamma^\infty(M, \Omega^k),
\]
where $\bar \nabla$ is the covariant derivative induced by the background connection on $E^0$, \cf \cite{MaratheMartucci89} for more details on the construction.

The vector space of test tensors is now defined as (this is a generalization of the definition used in \cite{LocCovDirac})
\begin{equation}
\label{eq:Tens}
 \T_c(SM, P) \doteq \Gamma^\infty_c(M, \bigotimes D_\rho^\oplus M \otimes \bigotimes T M \otimes \bigotimes E^0 \otimes \bigotimes E^1),
\end{equation}
where $\bigotimes$ denotes the direct sum of the tensor product bundles of all orders.
This defines a covariant functor from $\CatGSpMan$ to $\CatVec_i$.



\subsection{Functionals}

In the functional approach \cite{DuetschFredenhagenDeformation, BDF09, RejznerFermions}, one considers the algebra of functionals on a vector space of configurations, in the present case
\[
 \wedge \E^\oplus(SM, P) \doteq  \bigoplus_{k = 0}^\infty \wedge^k \E^\oplus(SM, P),
\]
with
\[
 \wedge^k \E^\oplus(SM, P) \doteq \{ B \in \Gamma^\infty(M^k, (D_\rho^\oplus M)^k) | B \text{ antisymmetric} \}.
\]
This space is equipped with its natural topology (uniform convergence of all derivatives on compact subsets). For an element $B \in \wedge \E^\oplus(SM, P)$, we denote by $B_k$ its component in $\wedge^k \E^\oplus(SM, P)$. On $\wedge \E^\oplus(SM, P)$, the conjugation $^*$ acts by fiberwise conjugation and reversal of the order of the arguments.

We now consider functionals on $\wedge \E^\oplus(SM, P)$, i.e., linear maps from this space into the complex numbers. The restriction of a functional $F$ to $\wedge^k \E^\oplus(SM, P)$ is denoted by $F_k$, and the grade $\betrag{F_k}$ of $F_k$ is $k$. The \emph{regular} functionals, $\F_\reg(SM, P)$, are those of the form
\begin{equation}
\label{eq:F_reg}
 F_k(B) = \int \skal{f_k}{B_k}(x_1, \dots, x_k) \ud_{\bar g} x_1 \dots \ud_{\bar g} x_k,
\end{equation}
with $f_k \in \Gamma^\infty_c(M^k, D_\rho^\oplus M^k)$, $f_k$ antisymmetric. Here $\ud_{\bar g} x$ is the volume element corresponding to the background metric. $f_k$ is called the \emph{kernel} of $F_k$. Here we used the obvious generalization of the pairing of sections of the double Dirac bundle. The \emph{support} of a functional is defined as the support of its kernel,
\[
 \supp F_k \doteq \supp_M f_k \doteq \{ x \in M | (x, x_2, \dots, x_k) \in \supp f_k \text{ for some } x_i \}.
\]

On $\F_\reg(SM, P)$, one introduces an antisymmetric product $\wedge$, by defining the kernel of the product $F \wedge H$ as
\begin{multline*}
 (f \wedge h)_k(x_1, \dots, x_k) \\ \doteq \frac{1}{k!} \sum_{l=0}^k  \sum_{\pi \in S_k} (-1)^{\betrag{\pi}} f_l(x_{\pi(1)}, \dots, x_{\pi(l)}) h_{k-l}(x_{\pi(l+1)}, \dots, x_{\pi(k)}).
\end{multline*}
An involution on $\F_\reg(SM, P)$ is defined as
\[
 F^*(B) \doteq \overline{F(B^*)}.
\]
Finally, we equip $\F_\reg(SM, P)$ with the topology induced from the standard locally convex topology on $\Gamma^\infty_c(M^k, D_\rho^\oplus M^k)$ (uniform convergence of all derivatives on compact sets), the space of the kernels.
The assignment $(SM, P) \mapsto \F_\reg(SM, P)$ is then a covariant functor from $\CatGSpMan$ to $\CatAlg$, where
\begin{equation}
\label{eq:F_Morphism}
 \F_\reg(\chi) (F)(B) \doteq F(\chi^* B)
\end{equation}
for a morphism $\chi$.

The regular functionals do not allow for the description of local interactions or nonlinear observables, such as the current. In order to cure this, one allows for more general kernels $f_k$, namely compactly supported distributions fulfilling the wave front set condition
\begin{equation}
\label{eq:microcausal}
 \WF(f_k) \cap (\bar V_+^k \cup \bar V_-^k) = \emptyset,
\end{equation}
where $\bar V_{\pm}$ is the closure of the dual of the forward/backward light cone. These are called the \emph{microcausal} functionals. They also form an algebra $\F(SM, P)$. It can be equipped with a topology such that it is a nuclear, locally convex vector space \cite{BDF09, Rej11}, see also \cite{DabrowskiBrouder} for a detailed discussion of this topology. $\F$ is then also a covariant functor from $\CatGSpMan$ to $\CatAlg$, with the action on morphisms as in \eqref{eq:F_Morphism}.

The subspace $\F_\loc(SM, P)$ of $\F(SM, P)$ in which the $f_k$'s are localized on the total diagonal $\diag_k$
with $\WF(f_k) \perp T \diag_k$ is the space of \emph{local} functionals. It is a covariant functor from $\CatGSpMan$ to $\CatVec_i$. We note that by \cite[Thm.~2.3.5]{HoermanderI}, the kernels are then of the form
\begin{equation}
\label{eq:LocalKernel}
 f_k(x_1, \dots, x_k) = \int f(x) \delta^{\alpha_1}(x,x_1) \dots \delta^{\alpha_k}(x,x_k) \ud_{\bar g} x,
\end{equation}
where $f$ is a compactly supported smooth section and the $\alpha_i$ are multiindices.

We denote by $\F_0(SM, P, \bar X)$ the ideal of functionals that vanish on all on-shell configurations, i.e., on configurations fulfilling $D^\oplus B = 0$, where $D^\oplus$ acts on an arbitrary coordinate. We define the on-shell functionals as $\F^\os(SM, P, \bar X) \doteq \F(SM, P) / \F_0(SM, P, \bar X)$. This amounts to identifying two functionals if they agree on all on-shell configurations. This is also a covariant functor from $\CatGSpMan$ to $\CatAlg$.

\begin{remark}
\label{rem:OnShell}
$F \in \F_0(SM, P, \bar X)$ implies that $f_k = \sum_j D^\oplus_j g^j_k$, where the $g_k^j$ are compactly supported distributional sections on $M^k$ fulfilling the wave front set condition \eqref{eq:microcausal} and $D^\oplus_j$ acts on the $j$th coordinate. To see this for $k=1$, choose a compactly supported smooth function $\chi$ such that $\chi=1$ on $\supp f_1$. Given $B_1$, define $\tilde B_1 \doteq S^\oplus_\ret(\chi D^\oplus B_1)$. Then $D^\oplus(B_1 - \tilde B_1)|_{\supp f_1} = 0$, so that $F(B) = F(\tilde B)$. It follows that $f_1$ can be written as $f_1 = D^\oplus (\chi S^\oplus_\adv(f_1))$. This straightforwardly generalizes to $k>1$.
\end{remark}

Functional derivatives are defined as follows \cite{RejznerFermions}:
\[
 F^{(1)}(B)(u) \doteq F(u \wedge B), \quad B \in \wedge \E^\oplus(SM, P), u \in \E^\oplus(SM, P).
\]
Hence, $F^{(1)}(B)$ can be interpreted as a compactly supported distributional section of $D_\rho^\oplus M$. We denote its integral kernel by $F^{(1)}(B)(x)$. For $F \in \F_\reg$, this is even a smooth section. The functional $B \mapsto F^{(1)}(B)(u)$ will in the following be denoted by $F^{(1)}(u)$.

\subsection{Quantization}

Quantization in the sense of deformation quantization \cite{DuetschFredenhagenDeformation} is straightforward for regular functionals, by defining the $\star$ product via functional derivatives and convolution with $S^\oplus$. In order to proceed to microcausal functionals, one uses Hadamard two-point functions:

\begin{definition}
A \emph{Hadamard two-point function} is a distributional section $\omega \in \Gamma^\infty_c(M^2, D_\rho^\oplus M^2)'$ fulfilling
\begin{align}
\label{eq:HadamardWaveEq}
 \omega(D^\oplus u, v) & = 0, \\
\label{eq:HadamardAnticommutator}
 \omega(u,v) + \omega(v,u) & = i S^\oplus(u,v), \\
\label{eq:HadamardConjugation}
 \overline{\omega(u, v)} & = \omega(v^*,u^*), \\
\label{eq:HadamardWF}
 \WF(\omega) & \subset C_+,
\end{align}
where $u, v \in \Gamma^\infty_c(M, D_\rho^\oplus M)$ and
\[
 C_\pm = \{ (x_1, x_2; k_1, - k_2) \in T^* M^2 \setminus \{ 0 \} | (x_1; k_1) \sim (x_2; k_2), k_1 \in \bar V^\pm_{x_1} \}.
\]
Here $(x_1; k_1) \sim (x_2; k_2)$ if there is a lightlike geodesic joining $x_1$ and $x_2$ to which $k_1$ and $k_2$ are co-parallel and co-tangent. For $x_1 = x_2$, $k_1, k_2$ are lightlike and coinciding.
\end{definition}

Note that \eqref{eq:HadamardWaveEq} and \eqref{eq:HadamardAnticommutator} imply that $\omega$ is in fact a bi-solution.

The existence of Hadamard two-point functions for arbitrary backgrounds was proven in \cite{LocCovDirac}.
For a Hadamard two-point function $\omega$, one defines
\begin{equation}
\label{eq:star_omega}
 F \star_\omega G \doteq \wedge \exp(\hbar \Gamma^\otimes_{\omega}) F \otimes G,
\end{equation}
where $F, G \in \F(SM, P)$, the wedge denotes the wedge product, $\wedge (F \otimes G) \doteq F \wedge G$, and
\[
 \Gamma^\otimes_{\omega} (F \otimes G) \doteq (-1)^{\betrag{F}+1} \int F^{(1)}(x) \otimes G^{(1)}(y) \omega(x,y) \ud_{\bar g} x \ud_{\bar g} y.
\]
Due to Remark~\ref{rem:OnShell} and the fact that $\omega$ is a bi-solution, this is well-defined also on $\F^\os(SM, P, \bar X)$.
In order to make covariance explicit, consider the set $\Had(SM, P, \bar X)$ of all Hadamard two-point functions on the background $(SM, P, \bar X)$, and define the space $\A(SM, P, \bar X)$ of \emph{quantum functionals} as the space of families
\begin{align*}
 F & = \{ F_\omega \}_{\omega \in \Had(SM, P, \bar X)}, & F_\omega \in \F(SM, P)[[\hbar]]
\end{align*}
fulfilling
\begin{equation}
\label{eq:F_omega'}
 F_{\omega'} = \exp(\hbar \Gamma_{\omega'-\omega}) F_\omega,
\end{equation}
where
\begin{equation}
\label{eq:DefGamma}
 \Gamma_{\omega} F \doteq \int \omega(x,y) F^{(2)}(x,y) \ud_{\bar g} x \ud_{\bar g} y.
\end{equation}
An element $F$ of $\A(SM, P, \bar X)$ is entirely specified by \eqref{eq:F_omega'} and $F_\omega$ for a single $\omega \in \Had(SM, P, \bar X)$.
We equip $\A(SM, P, \bar X)$ with the product
\[
 (F \star G)_\omega = F_\omega \star_\omega G_\omega,
\]
and the involution
\[
 (F^*)_\omega = (F_\omega)^*.
\]
The condition \eqref{eq:HadamardConjugation} ensures that this is consistent.

The assignment $(SM, P, \bar X) \mapsto (\A(SM, P, \bar X), \star)$ is a covariant functor from $\CatGSpMan$ to $\CatAlg$,
with
\begin{equation*}
 (\A(\chi) F)_{\omega'} \doteq \F[[\hbar]](\chi)(F_{\chi^* \omega'}),
\end{equation*}
with $\F(\chi)$ defined by \eqref{eq:F_Morphism}.
We define the algebra $\A^\os(SM, P, \bar X)$ of on-shell functionals analogously to $\F^\os(SM, P, \bar X)$. The local elements $\A_\loc(SM, P, \bar X)$ of $\A(SM, P, \bar X)$ are defined as those for which $F_\omega \in \F_\loc(SM, P)[[\hbar]]$ for one (and hence all) $\omega$. Again, $\A_\loc$ is a covariant functor from $\CatGSpMan$ to $\CatVec_i$.

By evaluating a state on $\A^\os(SM, P, \bar X)$ on products of linear functionals, i.e., functionals of the form \eqref{eq:F_reg} with $k=1$, multiplied with the $\star$ product, one obtains the $n$-point functions of the state. The truncated $n$-point functions are defined as usual.

\begin{definition}
A \emph{Hadamard state} is a state whose two-point function is a Hadamard two-point function and whose truncated $n$-point functions are smooth.
\end{definition}

The importance of Hadamard states is that the set of all Hadamard states is identical to the set of continuous functionals on $\A^\os(SM, P, \bar X)$ \cite{HollandsRuan}.

\subsection{Fields}

In the framework of locally covariant field theory \cite{BrunettiFredenhagenVerch}, fields allow to define functionals on different backgrounds in a coherent way. A \emph{field} $\Phi$ is a natural transformation from $\T_c$ to $\F_\loc$, i.e., to each background $(SM, P, \bar X)$ it associates a linear map $\Phi_{(SM, P, \bar X)}: \T_c(SM, P) \to \F_\loc(SM, P)$, such that, for a morphism $\chi:(SM,P,\bar X) \to (SM',P',\bar X')$,
\begin{equation}
\label{eq:Field}
 \F(\chi) \Phi_{(SM, P, \bar X)}(t) = \Phi_{(SM', P', \bar X')}(\T_c(\chi) t).
\end{equation}
Leaving the test tensor open, we can then interpret the kernel of $\Phi_{(SM, P, \bar X)}(\cdot)_k$ as a distributional section on $M^{k+1}$. We require that it is supported on the diagonal $\diag_{k+1}$, has finite order, and fulfills
\[
 \WF(\Phi_{(SM, P, \bar X)}(\cdot)_k) \perp T \diag_{k+1}.
\]
Hence, it can be written in a form analogous to \eqref{eq:LocalKernel}, but with $f$ not compactly supported.\footnote{Note that here one has to generalize \cite[Thm.~2.3.5]{HoermanderI} to non-compactly supported distributions, which is possible.}
A further condition implying a smooth and (if applicable) analytic dependence on the background data will be given in Section~\ref{sec:BackgroundVariation} below.

A $k$-\emph{local field} is a natural transformation from the functor $\T_c^{\otimes k}$ to the functor $\F$ (interpreted as a functor from $\CatGSpMan$ to $\CatVec_i$), preserving the support. $k$-local fields for $k>1$ are also called \emph{multilocal}. A \emph{quantum field} is a natural transformation from the functor $\T_c$ to the functor $\A_\loc$, which preserves the support, and analogously for multilocal quantum fields. Similarly, \emph{on-shell (quantum) (multilocal) fields} are natural transformations to the corresponding on-shell functors.


Examples for fields are the monomials that map a test section $t_k \in \Gamma^\infty_c(M, \wedge^k (D_\rho^\oplus M \otimes \bigotimes T M))$ to the functional
\[
 \Phi_{(SM, P, \bar X)}(t_k)(B) = \int \skal{t_k^{\underline{\mu_1} \dots \underline{\mu_k}}}{\bar \nabla^{\oplus 1}_{\underline{\mu_1}} \dots \bar \nabla^{\oplus k}_{\underline{\mu_k}} B_k}(x, \dots, x) \ud_{\bar g} x,
\]
where $\underline{\mu_i}$ is the multiindex corresponding to the $\bigotimes TM$ part of the $i$th factor of $\wedge^k (D_\rho^\oplus M \otimes \bigotimes T M)$, 
and $\bar \nabla^{\oplus i}$ denotes the covariant derivative \wrt the $i$th coordinate. One can thus generate all local functionals by fields. Hence, when aiming at proving a statement for local functionals, one can equivently prove it for fields.

An example for a field that we will encounter in the following is, for $A \in \Gamma^\infty_c(M, E^1)$,
\begin{equation}
\label{eq:DefCurrent}
 j_{(SM, P, \bar X)}(A)(B) \doteq \int (\gamma^\mu)^\alpha_\beta \rho(A_\mu(x))^a_b (B_2)_{a\alpha}^{\ \ \ b\beta}(x,x) \ud_{\bar g}x.
\end{equation}
Here $\alpha, \beta$ are double spinor indices, $a, b$ are gauge indices, and we pick the component of $B_2$, whose first entry is in the dual Dirac bundle and the second in the Dirac bundle.
This is nothing but the Lie algebra valued current of Yang-Mills theories. In a succinct notation, we may write it as
\begin{equation}
\label{eq:CurrentSuccinct}
 j^\mu_I = \psi^+ \gamma^\mu T_I \psi,
\end{equation}
where $I$ is a Lie algebra index and $T_I$ the corresponding generator in the representation $\rho$.
As we will show below, the current can be obtained by differentiating the free Dirac Lagrangean $S$, defined by
\begin{equation*}
 S_{(SM, P, \bar X)}(f)(B) \doteq \int f(x) (D^2_{(SM, P, \bar X)})^{a \alpha}_{\ \ \ b \beta} (B_2)_{a\alpha}^{\ \ \ b\beta}(x,x) \ud_{\bar g}x,
\end{equation*}
\wrt the background connection. Here $ f \in C^\infty_c(M)$ and $D^2$ denotes the Dirac operator acting on the second coordinate.

\begin{remark}
\label{rem:FieldMorphism}
It follows from \eqref{eq:F_Morphism} that, for a multilocal (quantum) field $\Phi$ and a morphism $\chi: (SM, P, \bar X') \to (SM, P, \bar X)$, we have
\[
 \Phi_{(SM, P, \bar X)}(\chi_* t_1, \dots, \chi_* t_k)(B) = \Phi_{(SM, P, \bar X')}(t_1, \dots, t_k)(\chi^* B).
\]
\end{remark}


As we have seen, the construction of fields is straightforward. However, the construction of quantum fields requires a parametrix:
\begin{definition}
\label{def:Parametrix}
A \emph{parametrix} $H$ is a quasi-covariant assignment $(SM, P, \bar X) \to H \in \Gamma_c^\infty(U, D^\oplus_\rho M^2)'$, where $U$ is a neighborhood of the diagonal of $M^2$, such that \eqref{eq:HadamardAnticommutator}, \eqref{eq:HadamardConjugation}, \eqref{eq:HadamardWF} hold.
Quasi-covariance means that for $\chi: D_\rho^\oplus M \to D_\rho^\oplus M'$ the bundle morphism corresponding to a morphism $(SM, P, \bar X) \to (SM', P', \bar X')$ we have that $H - \chi^* H'$ is smooth on the common domain and vanishing at the diagonal, together with all the derivatives.
\end{definition}

A construction prescription for parametrices was given in \cite{LocCovDirac}.

Given a parametrix $H$, we may associate to a local functional $F \in \F_\loc$ an element of $\A_\loc$ by
\begin{equation}
\label{eq:F_to_A}
 (\alpha_H(F))_\omega \doteq \exp(\hbar \Gamma_{\omega - H}) F,
\end{equation}
where the operator $\Gamma$ was defined in \eqref{eq:DefGamma}. This is well-defined as $H - \omega$ is smooth \cite{LocCovDirac} and the values of all its derivatives on the diagonal are unambiguous. As we only act on local functionals, the expression is well-defined even though $H$ is only defined in a neighborhood of the diagonal.

Let us now discuss scaling properties.
One notices that on a scaled background, $\bar X_\lambda \doteq (\lambda^{-2} g, \bar A, \lambda \bar m)$, the Dirac operator also scales. It follows that
there is a $*$-isomorphism $\sigma_\lambda: \A(SM, P, \bar X_\lambda) \to \A(SM, P, \bar X)$, acting on linear fields as
\begin{equation}
\label{eq:Def_sigma}
 \sigma_\lambda (\psi(u))_\omega = \lambda^{-\frac{n+1}{2}} \psi(f)_{\omega_\lambda},
\end{equation}
where $\omega_\lambda(u,v) = \lambda^{-n-1} \omega(u,v)$, \cf \cite[Lemma~4.2]{HollandsWaldWick} for a proof in the scalar case. For a multilocal (on-shell) quantum field $\Phi$, one defines another multilocal (on-shell) quantum field $S_\lambda \Phi$ by
\begin{equation}
\label{eq:ScaledField}
 (S_\lambda \Phi)_{(SM, P, \bar X)}(t_1, \dots, t_k) \doteq \lambda^{nk} \sigma_\lambda(\Phi_{(SM, P, \bar X_\lambda)}(t_1, \dots, t_k)).
\end{equation}
The \emph{scaling dimension} of a field $\Phi$ is defined as $\frac{n-1}{2}$ times the grade plus the scaling dimension of the geometric factors in $\Phi$ under the scaling $\bar X \to \bar X_\lambda$.

\subsection{Background variation}
\label{sec:BackgroundVariation}

In the following, we will need 
a means to compare functionals defined on different backgrounds. 
Consider an object $(SM, P, \bar X)$ of $\CatGSpMan$. We may obtain another object $(SM, P, \bar X + X)$ by considering compactly supported perturbations $X = (g, A, m) \in \mathfrak{P}(SM, P)$, where
\[
  \mathfrak{P}(SM, P) \doteq \Gamma^\infty_c(M, \Sym^2 TM \oplus E^1 \oplus (M \times \R)).
\]
Here $g$ denotes the perturbation of the metric (which has to be chosen small enough in order to preserve the signature), $A$ denotes the variation of the gauge connection, and $m$ denotes the variation of the Yukawa potential. Note that the space of connections is an affine space, so indeed the deviations from a given connection can be parametrized by a vector space, the sections of $E_1$.

We will need to identify configurations or test tensors on $(SM, P, \bar X)$ and $(SM, P, \bar X + X)$. For test tensors, i.e., elements of $\T_c(SM, P)$, this is no problem for tangent vectors and one-forms, as they do not depend on the background structure. Only spinors may be problematic, as the spin structure depends on the metric. To deal with this, we proceed as follows: When changing $(SM, P, \bar X)$ to $(SM, P, \bar X + X)$, we keep the spin and Dirac bundle, and just change the projection from the spin bundle $SM$ to the orthonormal frame bundle $FM$. For that, identify $FM$ with a principal $SO_0(n-1, 1)$ bundle $LM$. To construct $(SM, P, \bar X + X)$, keep $LM$ and the projection from $SM$ to $LM$, but change the identification of $FM$ and $LM$. Fix a trivialization, so that $LM|_U \simeq U \times SO_0(n-1, 1)$, i.e., we are dealing with matrices $B$ such that $B \eta B^t = \eta$, where $\eta = \Diag(-1,1,\dots,1)$. The identification of $LM$ and $FM$ for a metric $h$ is then a map $\pi_h$ such that $\pi_h(B) h \pi_h(B)^t = \eta$, i.e., it is given by a vielbein, $\pi_h(B)_a^\mu = B_a^b e_b^\mu$, with $h^{\mu \nu} = e_a^\mu \eta^{ab} e_b^\nu$. Changing the background metric then amounts to changing the vielbein. Infinitesimally, this change is given by $\delta e^\mu_a = - \frac{1}{2} e^\nu_a h^{\mu \lambda} \delta h_{\nu \lambda}$. This provides an identification of sections of the Dirac bundle on $(SM, P, \bar X)$ and $(SM, P, \bar X')$. This also gives an identification of the corresponding test tensor spaces $\T_c$ and configuration spaces $\wedge \E^\oplus$. This procedure is the one used in  \cite{ForgerRomer} to compute the stress-energy tensor for Dirac fermions.

%

In the following, we will need to consider families of backgrounds $\bar X_s$, $s \in I$, where $I$ is an interval of $\R$. We will require that the modifications only affect a compact subset, i.e., $\bar X_s = \bar X_{s'}$ outside a compact set $K$ and for all $s, s'$. The family is assumed to be smooth in the sense that for any fixed $s_0$, $\bar X_s - \bar X_{s_0}$ is jointly smooth in $s$ and the spacetime point $x$. A \emph{compatible family of functionals} is a family $F_s$, $s \in I$ such that $F_s \in \F(SM, P, \bar X_s)$. The definition of compatible families of on-shell and/or quantum functionals is completely analogous.
\begin{definition}
\label{def:F_s}
Given $F \in \F(SM, P, \bar X_{s_0})$,
\[
 \check F_s(B) \doteq i_{\bar X_s, \bar X_{s_0}} F(B) \doteq F(i_{\bar X_{s_0},\bar X_s} B)
\]
defines a compatible family of functionals, where $i_{\bar X_{s_0},\bar X_s}$ is the bijection from $\wedge \E^\otimes(SM, P, \bar X_{s})$ and $\wedge \E^\otimes(SM, P, \bar X_{s_0})$ described above.
\end{definition}
Note that this construction is not possible for on-shell or quantum functionals.

In order to have an identification of functionals on different backgrounds that is applicable also for on-shell or quantum functionals, we use M{\o}ller operators \cite{BreDue}. Consider two backgrounds $(SM, P, \bar X)$ and $(SM, P, \bar X')$, where the background fields $\bar X$ and $\bar X'$ differ in some compact region $K$. On $\E^\oplus(SM, P)$, we define the retarded M{\o}ller operator as
\[
 r^{\bar X, \bar X'} u \doteq i_{\bar X, \bar X'} u + S^{\oplus}_\ret([ i_{\bar X, \bar X'} \circ {D'}^\oplus - D^\oplus \circ i_{\bar X, \bar X'}] u).
\]
This is well defined, as the expression in square brackets is compactly supported. We have
\begin{align}
\label{eq:MollerMap}
 D^\oplus (r^{\bar X, \bar X'} u) & = i_{\bar X, \bar X'} {D'}^\oplus u, & \supp ( r^{\bar X, \bar X'} u - i_{\bar X, \bar X'} u ) \subset J^+(K).
\end{align}
Obviously, this map is continuous, and commutes with the conjugation $^*$. It is invertible, the inverse being given by $r^{\bar X', \bar X}$. We denote its transpose \wrt the pairing $\skal{ \cdot}{\cdot} : \D^\oplus(SM,P) \times \E^\oplus(SM,P) \to \C$ by ${r^{\bar X, \bar X'}}^t$.

On $\wedge^k \E^\oplus(SM, P)$, the M{\o}ller operator is defined as the continuous linear operator which acts on elements $u_1 \wedge \dots \wedge u_k$, $u_i \in \E^\oplus(SM, P)$ as
\[
 r^{\bar X, \bar X'} (u_1 \wedge \dots \wedge u_k) = (r^{\bar X, \bar X'} u_1) \wedge \dots \wedge (r^{\bar X, \bar X'} u_k).
\]

The retarded M{\o}ller operator can be used to construct a $*$-isomorphism of the algebras $\A(SM, P, \bar X')$ and $\A(SM, P, \bar X)$:
\begin{definition}
\label{def:tau}
For $F \in \A(SM, P, \bar X')$ we set
\[
 (\tau_\ret^{\bar X, \bar X'} F)_\omega(B) \doteq F_{\omega'}(r^{\bar X', \bar X} B),
\]
where $\omega'( \cdot, \cdot) \doteq \omega({r^{\bar X', \bar X}}^t \cdot, {r^{\bar X', \bar X}}^t \cdot)$.
\end{definition}

On elements of $\F(SM, P, \bar X')$, $\tau_\ret$ is defined in complete analogy.

\begin{remark}
To see that $\omega'$ as defined in Definition~\ref{def:tau} is a Hadamard two-point function, we first note that ${r^{\bar X', \bar X}}^t \circ {D'}^\oplus = D^\oplus \circ i_{\bar X, \bar X'}$, so that $\omega'$ is a bi-solution. As the M{\o}ller operator commutes with the conjugation, also \eqref{eq:HadamardConjugation} holds. Furthermore, $\omega'$ and $\omega$ coincide outside the causal future of $K$. In particular, they coincide when restricted to a neighborhood of a Cauchy surface in the past of $K$, so that $\omega'$ fulfills \eqref{eq:HadamardAnticommutator} and \eqref{eq:HadamardWF} there. As $\omega'$ is a bisolution and has the correct Cauchy data, it fulfills \eqref{eq:HadamardAnticommutator} everywhere. That \eqref{eq:HadamardWF} holds everywhere follows from arguments given in \cite[Sec.~4.2]{SandersDirac}.
\end{remark}

\begin{proposition}
\label{prop:tau_isomorphism}
The map $\tau_\ret^{\bar X, \bar X'}$ is a $*$-isomorphism, which restricts to a $*$-isomorphism of the algebras of on-shell (quantum) functionals.
\end{proposition}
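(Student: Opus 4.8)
The plan is to verify the defining ingredients of a $*$-algebra isomorphism — bijectivity together with well-definedness into $\A(SM,P,\bar X)$, multiplicativity, and compatibility with the involution — and then to check that the on-shell ideal is respected. Throughout I abbreviate $r \doteq r^{\bar X',\bar X}$ and exploit that $\tau_\ret^{\bar X,\bar X'}$ is concretely precomposition by $r$ combined with the relabelling $\omega \mapsto \omega'$ of two-point functions; since the M{\o}ller operator was built precisely so as to intertwine the relevant structures, most steps reduce to transport under $r$.

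\emph{Bijectivity and well-definedness.} Since $r$ is a continuous bijection of $\wedge\E^\oplus(SM,P)$ acting factorwise on wedge products, with continuous inverse $r^{\bar X,\bar X'}$, the precomposition $R F \doteq F\circ r$ is a linear bijection on functionals whose inverse is precomposition by $r^{\bar X,\bar X'}$; one checks it preserves the microcausal condition \eqref{eq:microcausal}, so it sends $\F(SM,P,\bar X')$ to $\F(SM,P,\bar X)$. To see that the family $\{(\tau_\ret^{\bar X,\bar X'}F)_\omega\}$ is a genuine element of $\A(SM,P,\bar X)$, I would verify the consistency relation \eqref{eq:F_omega'}: since by Definition~\ref{def:tau} the relabelling is affine, $\omega_2' - \omega_1' = (\omega_2-\omega_1)({r}^t\,\cdot,\,{r}^t\,\cdot)$, and $F$ already obeys \eqref{eq:F_omega'} on $\bar X'$, the relation on $\bar X$ follows from the intertwining of the contraction operators established next. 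The inverse of $\tau_\ret^{\bar X,\bar X'}$ is then $\tau_\ret^{\bar X',\bar X}$.

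\emph{Multiplicativity and involution.} This is the heart of the argument. Recalling from \eqref{eq:star_omega} that $\star_\omega = \wedge\circ\exp(\hbar\Gamma^\otimes_\omega)$, it suffices to show that $R$ intertwines $\star_{\omega'}$ with $\star_\omega$. Because $r$ acts factorwise on wedge products, $R$ is a homomorphism for the classical product, $\wedge\circ(R\otimes R) = R\circ\wedge$, so only the bidifferential operators $\Gamma^\otimes$ remain. The key computation is the transformation of functional derivatives: from $F^{(1)}(B)(u) = F(u\wedge B)$ and the factorwise action of $r$ one gets $(RF)^{(1)}(B) = {r}^t\,F^{(1)}(rB)$. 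Substituting this into $\Gamma^\otimes_\omega$ and using the very definition $\omega' = \omega({r}^t\,\cdot,\,{r}^t\,\cdot)$ turns the contraction against $\omega$ into a contraction against $\omega'$, which gives the operator identity $\Gamma^\otimes_\omega\circ(R\otimes R) = (R\otimes R)\circ\Gamma^\otimes_{\omega'}$. Exponentiating and composing with $\wedge$ yields $R(F\star_{\omega'}G) = (RF)\star_\omega(RG)$, i.e. $\tau_\ret(F\star'G)=(\tau_\ret F)\star(\tau_\ret G)$. The same relabelling, the fact established in the Remark preceding the proposition that $\omega'$ is again Hadamard, and the invariance of $^*$ under $r$, give $\tau_\ret(F^*)=(\tau_\ret F)^*$.

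\emph{On-shell restriction and main obstacle.} Reading \eqref{eq:MollerMap} in the reverse direction gives ${D'}^\oplus r = i_{\bar X',\bar X} D^\oplus$, so $r$ maps $\bar X$-solutions to $\bar X'$-solutions. Hence if $F$ vanishes on all $\bar X'$-on-shell configurations, then $RF = F\circ r$ vanishes on all $\bar X$-on-shell configurations, so $\tau_\ret$ maps $\F_0(SM,P,\bar X')$ into $\F_0(SM,P,\bar X)$; by invertibility this is a bijection of the ideals and descends to an isomorphism of the quotients $\F^\os$, and likewise for $\A^\os$, where one additionally uses that $\star_\omega$ is well-defined on-shell by Remark~\ref{rem:OnShell}. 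I expect the multiplicativity step to be the genuine obstacle: the formal contractions must be made rigorous by controlling wave front sets so that all pairings against $\omega$ and $\omega'$ exist, and one must track the differing volume elements $\ud_{\bar g}$, $\ud_{\bar g'}$ and the identification $i_{\bar X,\bar X'}$ hidden inside $r$, confirming that the transpose ${r}^t$ is taken with respect to exactly the pairing implicit in $\Gamma^\otimes$ so that the relabelling $\omega\mapsto\omega'$ goes through cleanly.
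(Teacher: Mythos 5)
Your proposal is correct and follows essentially the same route as the paper's (much terser) proof: the homomorphism property from the form \eqref{eq:star_omega} of $\star_\omega$ together with the definition of $\omega'$ (your intertwining relation $\Gamma^\otimes_\omega\circ(R\otimes R)=(R\otimes R)\circ\Gamma^\otimes_{\omega'}$ is exactly the content of that remark), invertibility and continuity from the corresponding properties of the M{\o}ller operator, the $*$-property from its commutation with the conjugation, and the on-shell statement from ${D'}^\oplus(r^{\bar X',\bar X}u)=0$ iff $D^\oplus u=0$. The additional points you raise (preservation of the microcausal condition, consistency with \eqref{eq:F_omega'}, bookkeeping of volume elements) are legitimate details the paper leaves implicit, and your treatment of them is sound.
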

\begin{proof}
The homomorphism property follows from the form \eqref{eq:star_omega} of the $\star_\omega$ product and the definition of $\omega'$. The isomorphism property follows from the invertibility of the M{\o}ller operator, and the $*$ property by the fact that the involution $^*$ commutes with the M{\o}ller operator. Continuity follows from the M{\o}ller operator being continuous. The last statement follows from the fact that ${D'}^\oplus (r^{\bar X', \bar X} u) = 0$ if and only if $D^\oplus u = 0$.
\end{proof}

The restriction of $\tau_\ret$ to on-shell quantum functionals coincides with the retarded variation employed in \cite{HollandsWaldStress}.

\begin{proposition}
\label{prop:tauMorphism}
If $(SM, P, \bar X)$ and $(SM, P, \bar X')$ as above are related by a morphism $\chi: (SM, P, \bar X') \to (SM, P, \bar X)$ of $\CatGSpMan$ and $\Phi$ is a multilocal on-shell (quantum) field, then
\[
 \tau_\ret^{\bar X, \bar X'} (\Phi_{(SM, P, \bar X')}(t_1, \dots, t_k)) = \Phi_{(SM, P, \bar X)}(\chi_* t_1, \dots, \chi_* t_k).
\] 
\end{proposition}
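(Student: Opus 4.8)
The plan is to unfold both sides against an arbitrary Hadamard two-point function $\omega$ on $(SM, P, \bar X)$ and an arbitrary on-shell configuration $B$, and to reduce the claim to a single statement about the retarded M{\o}ller operator. Writing $F' \doteq \Phi_{(SM, P, \bar X')}(t_1, \dots, t_k)$, Definition~\ref{def:tau} gives $(\tau_\ret^{\bar X, \bar X'} F')_\omega(B) = F'_{\omega'}(r^{\bar X', \bar X} B)$ with $\omega' = \omega({r^{\bar X', \bar X}}^t \cdot, {r^{\bar X', \bar X}}^t \cdot)$. On the other side, naturality of the quantum field $\Phi$ together with $\chi_* t_i = \T_c(\chi) t_i$ yields $\Phi_{(SM, P, \bar X)}(\chi_* t_1, \dots, \chi_* t_k) = \A(\chi) F'$, and by the definition of $\A(\chi)$ this equals the family $B \mapsto F'_{\chi^* \omega}(\chi^* B)$ (in the purely classical case this is just Remark~\ref{rem:FieldMorphism}). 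Hence it suffices to establish the two identities $r^{\bar X', \bar X} B = \chi^* B$ for on-shell $B$ and $\omega' = \chi^* \omega$.

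First I would prove the key lemma that $r^{\bar X', \bar X}$ and $\chi^*$ coincide on solutions of $D^\oplus$. Both maps send solutions on $\bar X$ to solutions on $\bar X'$: for the M{\o}ller operator this is \eqref{eq:MollerMap}, which gives ${D'}^\oplus(r^{\bar X', \bar X} u) = i_{\bar X', \bar X} D^\oplus u = 0$, and for the pullback it follows from covariance of the Dirac operator, ${D'}^\oplus \chi^* = \chi^* D^\oplus$. Moreover, in the causal past of $K$ the backgrounds $\bar X$ and $\bar X'$ agree, so that the identification $i_{\bar X', \bar X}$ is trivial there and, by \eqref{eq:MollerMap}, $r^{\bar X', \bar X} u = i_{\bar X', \bar X} u$ to the past of $K$; since the morphism $\chi$ likewise reduces to the identity on the region where the backgrounds coincide, $\chi^* u$ equals $i_{\bar X', \bar X} u$ there as well. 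Thus $r^{\bar X', \bar X} u$ and $\chi^* u$ are two solutions on $\bar X'$ with identical Cauchy data on a Cauchy surface to the past of $K$, hence equal by uniqueness of the Cauchy problem. As $\Phi$ is an on-shell field, only solutions $B$ enter, which establishes $r^{\bar X', \bar X} B = \chi^* B$ and hence the classical statement.

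For the quantum bookkeeping it remains to check $\omega' = \chi^* \omega$. Taking transposes in the lemma, $\langle {r^{\bar X', \bar X}}^t u - \chi_* u, w\rangle = \langle u, (r^{\bar X', \bar X} - \chi^*) w\rangle = 0$ for every solution $w$ on $\bar X$ (using that, for the isometry underlying $\chi$, the transpose of $\chi^*$ is the pushforward $\chi_*$). By the argument of Remark~\ref{rem:OnShell} a compactly supported section annihilating all solutions lies in the range of $D^\oplus$, so ${r^{\bar X', \bar X}}^t u - \chi_* u$ is of that form; since $\omega$ is a bisolution it annihilates this difference, giving $\omega' = \chi^* \omega$. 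With both $r^{\bar X', \bar X} B = \chi^* B$ and $\omega' = \chi^* \omega$ in hand, $F'_{\omega'}(r^{\bar X', \bar X} B) = F'_{\chi^* \omega}(\chi^* B)$, which is the assertion. Since both $\tau_\ret^{\bar X, \bar X'}$ and $\A(\chi)$ are $*$-isomorphisms (Proposition~\ref{prop:tau_isomorphism} and functoriality), one could alternatively first reduce to linear generators and then extend by the homomorphism property.

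I expect the main obstacle to be the key lemma, and within it the geometric input that the morphism $\chi$ acts trivially in the causal past of $K$ where the two backgrounds coincide: this is what allows the Cauchy-data comparison of $r^{\bar X', \bar X} u$ and $\chi^* u$, and it is the place where the hypothesis that $\bar X$ and $\bar X'$ are simultaneously related by a morphism and differ only in the compact set $K$ is genuinely used. A secondary subtlety is the consistency of the two-point-function transport, i.e.\ that the replacement $\omega \mapsto \omega'$ dictated by the M{\o}ller operator matches the pullback $\chi^* \omega$ appearing in the functor $\A(\chi)$; this is what makes $\tau_\ret$ and the covariant identification agree at the quantum and not merely the classical level.
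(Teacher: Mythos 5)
Your proposal is correct and follows essentially the same route as the paper: the paper's proof likewise establishes $r^{\bar X', \bar X} u = \chi^* u$ for solutions via \eqref{eq:MollerMap} and uniqueness of the Cauchy problem, deduces the corresponding relation for the two-point function, and concludes from Remark~\ref{rem:FieldMorphism}. You merely fill in the details the paper leaves as ``straightforward'' (the Cauchy-data comparison to the past of $K$ and the bisolution argument identifying $\omega'$ with the transported $\omega$), which is consistent with the intended argument.
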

\begin{proof}
As a straightforward consequence of \eqref{eq:MollerMap} and the uniqueness of the Cauchy problem, $r^{\bar X', \bar X} u = \chi^* u$ for $D^\oplus u = 0$. Hence, also $\omega' = \chi_* \omega$.
The claim then follows from Remark~\ref{rem:FieldMorphism}.
\end{proof}

It is convenient to have an infinitesimal version of $\tau_\ret$:

\begin{definition}
Let $X \in \mathfrak{P}(SM, P)$. Choose a smooth family $\bar X_s$ such that $\bar X_0 = \bar X$ and $\del_s \bar X_s|_{s=0} = X$. Given a compatible family $F_s$ of (on-shell) (quantum) functionals, define
\[
 \delta^X_\ret F \doteq \frac{\ud}{\ud s} \tau^{\bar X, \bar X_s} F_s|_{s=0}.
\]
\end{definition}

As a straightforward consequence of Proposition~\ref{prop:tau_isomorphism}, it follows that $\delta_\ret$ fulfills, for compatible families $F, G$ of quantum functionals,
\begin{align}
\label{eq:delta_Leibniz}
 \delta_\ret^X (F \star G) & = \delta_\ret^X F \star G + F \star \delta_\ret^X G, \\
\label{eq:delta_hermitean}
 \delta_\ret^X F^* & = (\delta_\ret^X F)^*,
\end{align}
where
\begin{align*}
 (F \star G)_s & \doteq F_s \star_s G_s, &
 (F^*)_s \doteq F_s^*.
\end{align*}

For fields, yet another way to construct a compatible family of functionals is possible: Given a $k$-local (on-shell) (quantum) field $\Phi$ and a set of corresponding test tensors $t_k \in \T_c(SM, P, \bar X_{s_0})$, we may define
\begin{equation}
\label{eq:TildePhi}
 \tilde \Phi(t_1, \dots, t_k)_s \doteq \Phi_{(SM, P, \bar X_s)}(i_{\bar X_s, \bar X_{s_0}} t_1, \dots, i_{\bar X_s, \bar X_{s_0}} t_k).
\end{equation}
A straightforward consequence of Proposition~\ref{prop:tauMorphism} is then:
\begin{lemma}
\label{lemma:deltaMorphism}
Let $\Phi$ be a multilocal on-shell (quantum) field and $\chi_s$ be a family of morphisms $(SM, P, \bar X_s) \to (SM, P, \bar X)$ with $\bar X = \bar X_0$ and $\bar X_s - \bar X$ compactly supported. Let $X = \del_s \bar X_s|_{s=0}$. Then
\[
 \delta_\ret^X \tilde \Phi(t_1, \dots, t_k) = \sum_j \Phi_{(SM, P, \bar X)}(t_1, \dots, \Lie_X t_j, \dots t_k),
\]
where $\Lie_X t \doteq \frac{\ud}{\ud s} \chi_{s *} ( i_{\bar X_s, \bar X} t )|_{s=0}$.
\end{lemma}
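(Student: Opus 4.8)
The plan is to reduce the claim to Proposition~\ref{prop:tauMorphism} and then differentiate in $s$. First I would combine the definition \eqref{eq:TildePhi} of $\tilde\Phi$ with Proposition~\ref{prop:tauMorphism}, applied with $\bar X' = \bar X_s$, morphism $\chi_s$, and test tensors $i_{\bar X_s, \bar X} t_j \in \T_c(SM, P, \bar X_s)$. Since $\tilde\Phi(t_1, \dots, t_k)_s = \Phi_{(SM, P, \bar X_s)}(i_{\bar X_s, \bar X} t_1, \dots, i_{\bar X_s, \bar X} t_k)$, the proposition yields the closed form
\[
 \tau_\ret^{\bar X, \bar X_s}\bigl( \tilde\Phi(t_1, \dots, t_k)_s \bigr) = \Phi_{(SM, P, \bar X)}\bigl( \chi_{s*}(i_{\bar X_s, \bar X} t_1), \dots, \chi_{s*}(i_{\bar X_s, \bar X} t_k) \bigr),
\]
in which all $s$-dependence has been transported into the arguments of a single field on the fixed background $\bar X$.

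The next step is to apply $\frac{\ud}{\ud s}\big|_{s=0}$. By the definition of $\delta_\ret^X$, the left hand side becomes $\delta_\ret^X \tilde\Phi(t_1, \dots, t_k)$. On the right hand side I would use that $\Phi_{(SM, P, \bar X)}$, being a multilocal field, is continuous and linear in each of its $k$ arguments, so that the derivative distributes over the arguments by the Leibniz rule. Normalizing the family so that $\chi_0 = \id$ and noting that $i_{\bar X, \bar X} = \id$, every undifferentiated factor at $s=0$ collapses to $t_i$, leaving exactly the sum $\sum_j \Phi_{(SM, P, \bar X)}\bigl(t_1, \dots, \frac{\ud}{\ud s}\chi_{s*}(i_{\bar X_s, \bar X} t_j)\big|_{s=0}, \dots, t_k\bigr)$. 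Recognizing the differentiated factor as $\Lie_X t_j$ by its definition then gives the claimed identity.

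The single point requiring genuine care—and the main, if modest, obstacle—is the interchange of the $s$-derivative with the field map $\Phi_{(SM, P, \bar X)}$. This is licensed by the smoothness and continuity assumptions on fields (the smooth dependence on the background data referred to in Subsection~\ref{sec:BackgroundVariation}): since $s \mapsto \chi_{s*}(i_{\bar X_s, \bar X} t_j)$ is a smooth curve in the locally convex space $\T_c(SM, P, \bar X)$ and $\Phi_{(SM, P, \bar X)}$ is continuous and multilinear into $\F$, the Leibniz rule applies. One should also verify that $\Lie_X t_j$ is a bona fide element of $\T_c(SM, P, \bar X)$; this holds because both $i_{\bar X_s, \bar X}$ and the pushforward $\chi_{s*}$ preserve smoothness and compact support, properties inherited by the $s$-derivative. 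The on-shell and quantum cases are handled identically, using that $\tau_\ret$ and $\delta_\ret$ respect the corresponding quotients and formal power series structure.
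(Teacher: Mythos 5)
Your proof is correct and follows exactly the route the paper intends: the paper states Lemma~\ref{lemma:deltaMorphism} as ``a straightforward consequence of Proposition~\ref{prop:tauMorphism}'' without writing out the details, and your argument---combining \eqref{eq:TildePhi} with Proposition~\ref{prop:tauMorphism} for $\bar X' = \bar X_s$ and then differentiating at $s=0$ via the Leibniz rule---is precisely that consequence spelled out. The points you flag (continuity/multilinearity of $\Phi$ justifying the interchange of derivative and field map, and $\chi_0 = \id$) are the right ones to check.
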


In the following, we require a smooth dependence of fields on the background. Denoting by $\tilde \Phi(x)_s$ the distributional kernel of $\Phi(\cdot)_s$, consider the kernel of $(i_{\bar X_{s_0}, \bar X_s} \tilde \Phi(x)_s)_k$ as a distributional section on $I \times M^{k+1}$. We require that its wave front set is contained in
\begin{multline*}
 \left\{ (s, x, x_1, \dots, x_k; \sigma, \xi, \xi_1, \dots, \xi_k) \in \dot T^*(I \times M^{k+1}) \mid \right. \\
 \left. x=x_1 = \dots = x_k, \xi + \sum\nolimits_i \xi_i = 0, (\xi, \xi_1, \dots, \xi_k) \neq 0 \right\}.
\end{multline*}
In the case of an analytic family of analytic backgrounds, the same should be true for the analytic wave front set, \cf \cite{HollandsWaldTO, HoermanderI}.

Knowing that our fields are smooth \wrt variations of the background, we may consider the derivative of fields \wrt to such variations:

\begin{definition}
Let $X \in \mathfrak{P}(SM, P)$. Choose a smooth family $\bar X_s$ such that $\bar X_0 = \bar X$ and $\del_s \bar X_s|_{s=0} = X$. Then the functional differentiation of a field $\Phi$ \wrt the background is defined as
\begin{equation}
\label{eq:DefFieldDerivative}
 \Phi^{(1)}_{(SM, P, \bar X)}(X,t) \doteq \del_s (i_{\bar X, \bar X_s} \tilde \Phi(t)_s )|_{s=0}.
\end{equation}
\end{definition}


%
%
%

By definition we have that $\del_s (i_{\bar X, \bar X_s} \check \Phi_{(SM, P, \bar X)}(t)_s ) = 0$, \cf Definition~\ref{def:F_s}. An obvious consequence is that
\begin{equation}
\label{eq:Diff_tilde_Phi_Phi}
 \del_s \left( i_{\bar X, \bar X_s} \left( \tilde \Phi(t)_s - \check \Phi_{(SM, P, \bar X)}(t)_s \right) \right)|_{s=0} = \Phi^{(1)}_{(SM, P, \bar X)}(X, t).
\end{equation}

We will impose one more condition on the smoothness of fields: Leaving the two slots for test tensors open, we may consider the kernel of $\Phi^{(1)}_{(SM,P,\bar X)}(\cdot, \cdot)_k$ as a distributional section on $M^{k+2}$. From \eqref{eq:Field}, it is clear that it has support on the total diagonal $\diag_{k+2}$. We also require it to have finite order and wave front set orthogonal to $T \diag_{k+2}$.
In particular, $\Phi^{(1)}$ can be seen as a field, where the test tensor is some linear combination of $\bar \nabla_{(\lambda_1} \dots \bar \nabla_{\lambda_r)} X \otimes \bar \nabla_{(\rho_1} \dots \bar \nabla_{\rho_s)} t$. We denote this new test tensor by $X * t$ in the following, so that, when we wish to emphasize that $\Phi^{(1)}$ is a field, we write $\Phi^{(1)}_{(SM, P, \bar X)}(X * t)$ instead of $\Phi^{(1)}_{(SM, P, \bar X)}(X,t)$.


\begin{definition}
\label{def:Current}
The \emph{current} is given by
\begin{equation}
\label{eq:Def_j}
 j_{(SM, P, \bar X)}(A) \doteq S^{(1)}_{(SM, P, \bar X)}(A, f),
\end{equation}
with $f$ chosen to be identical to $1$ on the support of $A$. By the above, this defines a field. 
This coincides with \eqref{eq:DefCurrent}.
\end{definition}

Finally, we want to consider how $\delta^X_\ret$ behaves under scaling, \cf \eqref{eq:Def_sigma}.
From the definition of $r^{\bar X, \bar X'}$ and the fact that the Dirac operator scales under the scaling transformation $\bar X \to \bar X_\lambda = (\lambda^{-2} \bar g, \bar A, \lambda \bar m)$, we conclude that $r^{\bar X, \bar X'} = r^{\bar X_\lambda, \bar X'_\lambda}$, and hence
\begin{equation}
\label{eq:delta_sigma}
 \delta^X_\ret \sigma_\lambda = \sigma_\lambda \delta^{X_\lambda}_\ret
\end{equation}
on a family of (on-shell) (quantum) functionals compatible with a family of backgrounds $\bar X_s$ with $\bar X_0 = \bar X_\lambda$ and $\del_s \bar X_s |_{s=0} = X_\lambda = (\lambda^{-2} g, A, \lambda m)$.

\subsection{Time-ordered products}

In order to describe interacting fields, one has to introduce time-ordered products. These are natural transformations from $\F_\loc^{\otimes k}$ to $\A^\os$, interpreted as functors from $\CatGSpMan$ to $\CatVec_i$, for arbitrary $k$, fulfilling certain axioms.
By concatenation with fields, the time-ordered products are thus multilocal on-shell quantum fields.\footnote{For some purposes, it is conventient to consider time-ordered products that map into the off-shell quantum fields, \cf \cite{BDF09, RejznerFredenhagenQuantization}, for example. For our purposes, the on-shell quantum fields are sufficient.}
The axioms for time-ordered products are \cite{HollandsWaldTO}:
\begin{description}

\item[Starting element:]
On a c-number functional, i.e., a functional satisfying $F(B) = F_0$ for all $B \in \wedge \E^{\oplus}$, $\TO$ acts as the identity.
 
\item[Symmetry:]
 Time-ordered products are graded symmetric, i.e.,
 \[
 \TO(F_1, \dots, F_i, F_{i+1}, \dots, F_k) = (-1)^{\betrag{F_i} \betrag{F_{i+1}}} \TO(F_1, \dots, F_{i+1}, F_i, \dots, F_k).
 \]

\item[Support:]
 $\supp \TO (F_1, \dots, F_k) \subset \cup_i \supp F_i$.
 
\item[Causal factorization:] Let $F_i$, $G_j$ be such that $\supp F_i \cap J^-(\supp G_j) = \emptyset$ for all $i, j$. Then
 \[
  \TO(F_1, \dots, F_k, G_1, \dots G_l) = \TO(F_1, \dots F_k) \star \TO(G_1, \dots G_l).
 \]

\item[Scaling:] The time-ordered products scale \emph{almost homogeneously}, i.e., if applied to fields $\Phi_i$ we have
\begin{multline}
\label{eq:Scaling}
 \lambda^{-d_\Phi} S_\lambda \TO(\Phi_1(t_1), \dots \Phi_k(t_k)) = \\
\sum_{I_0 \sqcup \dots \sqcup I_j } (-1)^\Pi \TO( \Phi_{I_0}(t_{I_0}), r_\lambda( \Phi_{I_1}(t_{I_1}) ), \dots, r_\lambda( \Phi_{I_j}(t_{I_j}) )).
\end{multline}
Here the sum is over all partitions of $\{ 1, \dots, k \}$ into disjoint subsets, with $I_i \neq \emptyset$ for all $i \geq 1$. $\Phi_I$ stands for the collection of $\Phi_i$ with $i \in I$,
and $\Pi$ is a combinatorial factor which takes the grades and permutations of the $\Phi_i$ into account. $d_\Phi$ is the sum of the scaling dimensions of the $\Phi_i$. Finally, $r_\lambda$ are natural transformations from $\F^{\otimes m}_\loc$ to $\F_\loc$, which fulfill the properties of renormalization maps discussed in Remark~\ref{rem:Renormalization} below, and which are polynomials in $\log \lambda$.

\item[Microlocal spectrum condition:]  Let $\omega$ be a Hadamard state. 
Then for all fields $\Phi_i$ the distributional section $\omega(\TO( \Phi_1(x_1), \dots, \Phi_k(x_k))$ has a wave front set contained in $C_T^k \subset T^* M^k$, defined through decorated graphs, \cf \cite{BrunettiFredenhagenScalingDegree, HollandsWaldTO}.

\item[Smoothness:] The time-ordered products depend smoothly on the background fields \cite{HollandsWaldTO}. Thus, let $\bar X_s$ depend
smoothly on a parameter $s \in \R$. Let $\omega^{(s)}$ be a family of Hadamard states on $\A(SM, P, \bar X_s)$,
smoothly depending on $s$.
One then requires that, for all fields $\Phi_i$,
\begin{multline*}
 \WF \left( \omega^{(s)} \left( \TO^{(s)}(\tilde \Phi_1(x_1)_s, \dots, \tilde \Phi_k(x_k)_s) \right) \right) \\ \subset \left\{ (s, \sigma; \{ x_i, \xi_i \}) \in \dot T^*(\R \times M^k) | (\{x_i, \xi_i \}) \in C_T^{k, (s)} \right\},
\end{multline*}
where we used \eqref{eq:TildePhi}.

\item[Analyticity:] In the case of an analytic spacetime, the time-ordered products depend analytically on the background fields. This is made precise by a condition analogous to the one for smoothness, \cf \cite{HollandsWaldTO}.

\item[Expansion:] The time ordered product commutes with functional differentiation, i.e.,
\begin{equation}
\label{eq:TO_Expansion}
 \TO(F_1, \dots, F_k)^{(1)}(x) = \sum_{i = 1}^k (-1)^{\sum_{l=1}^{i-1} \betrag{F_l}} \TO(F_1, \dots, F_i^{(1)}(x), \dots, F_k).
\end{equation}

\item[Unitarity:] We have
\begin{equation}
\label{eq:Unitarity}
\TO (F_1, \dots, F_k)^* = \sum_{I_1 \sqcup \dots \sqcup I_j} (-1)^{k+j+\Pi} \TO( F_{I_1}^*) \star \dots \star \TO( F_{I_j}^*),
\end{equation}
where $I_1 \sqcup \dots \sqcup I_j$ denotes all partitions of $\{ 1, \dots, k \}$ into nonempty, pairwise disjoint subsets. $\TO(F^*_I)$ stands for $\TO(F^*_{I(i)}, \dots, F^*_{I(1)})$ and $\Pi$ denotes a combinatorial factor, depending on the grades of the $F_i$ and the partition, which accounts for the reordering of the $F_i$ on the right hand side.

\item[Source term:] For a linear functional $F(B) = \int \skal{f}{B_1}(x) \ud_{\bar g}x$, with $f \in \D^\oplus_\rho(SM,P)$, we have
\begin{multline}
\label{eq:T11a}
 \TO(F, F_1, \dots, F_k) = F \star \TO(F_1, \dots, F_k) \\
+ i \hbar \sum_j (-1)^{\sum_{l=1}^{j-1} \betrag{F_l}} \TO(F_1, \dots, F^{(1)}_j(S^\oplus_\ret f), \dots, F_k).
\end{multline}

\end{description}

Given a parametrix $H$, the map $\alpha_H$ defined in \eqref{eq:F_to_A} defines time-ordered products for $k=1$.

We note that condition \eqref{eq:T11a} corresponds to axiom T11a of \cite{HollandsWaldStress}. It implies the axiom of the field equation, which was used in \cite{LocCovDirac}. The proof that it can be fulfilled by a redefinition of the time-ordered products proceeds as in the scalar case, \cf \cite[Section~6.1]{HollandsWaldStress}.

\begin{remark}
\label{rem:Renormalization}
The time-ordered products are not unique. Given time-ordered products $\TO$, one can define new ones, $\TO'$, by
\[
 \TO'(F_1, \dots, F_k) = \sum_{I_0 \sqcup \dots \sqcup I_j } (-1)^\Pi \TO( F_{I_0}, r^{\betrag{I_1}}( F_{I_1} ), \dots, r^{\betrag{I_j}}( F_{I_j} )),
\]
where we used the same notation as in \eqref{eq:Scaling} and $r^k$, $k \leq 1$ are natural transformations from $\F_\loc^{\otimes k}$ to $\F_\loc$. They are at least  of first order in $\hbar$, lower the degree in the total grade by even numbers, and fulfill {\bf Symmetry}, {\bf Support}, {\bf Scaling}, {\bf Expansion}. As we can see a local functional as a field, we may write, for arbitrary fields $\Phi_i$,
\[
 r^k(\Phi_1(x_1), \dots, \Phi_k(x_k)) = \int C_k(x,x_1, \dots x_k) \Phi(x) \ud_{\bar g} x,
\]
where $\Phi$ is a field and $C$ a distributional section on $M^{k+1}$, supported on $\diag_{k+1}$. Of course both $C$ and $\Phi$ depend on the $\Phi_i$. In order to ensure the {\bf Microlocal spectrum condition}, we require
\[
 \WF(C_k) \perp T\diag_{k+1}.
\]
In order to ensure {\bf Smoothness}, one requires that for smooth variations $\bar X_s$ of the background field,
\[
 \WF(C_k^s(x, x_1, \dots, x_k)) \perp T(I \times \diag_{k+1}),
\]
where $I$ is the interval on which $s$ is varied. Analogously, one defines the {\bf Analyticity} condition. To ensure {\bf Starting element}, $r^1$ has to vanish on c-number functionals, and to ensure {\bf Unitarity} one requires
\[
 r^k(F_1, \dots, F_k)^* = (-1)^{k+1} r^k(F_k^*, \dots F_1^*).
\]
Finally, to preserve {\bf Source term}, one requires $r^k$ to vanish if one of its entries is a linear functional.
This exhausts the renormalization freedom of time-ordered products \cite{HollandsWaldWick}.
\end{remark}

Given an interaction Lagrangean $S^\ia$, i.e., a field $S^\ia_{(SM, P, \bar X)}(f)$ for $f \in C_c^\infty(M)$,  one can define interacting (multilocal) quantum fields as follows: One first restricts attention to an open causally convex region $\Oo \subset M$ and to test sections $t$ supported inside $\Oo$. One then chooses a test function $h$, which is identical to $1$ on the closure of $\Oo$, and sets, for arbitrary fields $\Phi$,
\begin{multline*}
 (\Phi_{S^\ia(h)})_{(SM, P, \bar X)}(t) \\
 \doteq \sum_{k=0}^\infty \frac{i^k}{\hbar^k k!} \Ret(\Phi_{(SM, P, \bar X)}(t); \underbrace{S^\ia_{(SM, P, \bar X)}(h),  \dots,  S^\ia_{(SM, P, \bar X)}(h)}_{k \text{ times}}),
\end{multline*}
where the \emph{retarded product} $\Ret$ is defined by
\begin{align}
\label{eq:Ret}
 \Ret(e^{iF}; e^{iG}) & \doteq \TO(e^{iG})^{\star -1} \star \TO(e^{i(F+G)}), & \TO(e^{iF}) & \doteq \sum_{k=0}^\infty \frac{i^k}{k!} \TO(\underbrace{F, \dots, F}_{k \text{ times}}).
\end{align}
The definition \eqref{eq:Ret} has to be understood in the sense of formal power series in $F$ and $G$. The interacting quantum field is a formal power series in $S^\ia$ (and also $\hbar$). The algebra $\A_{S^\ia}(\Oo)$ generated by the $(\Phi_{S^\ia(h)})_{(SM, P, \bar X)}(t)$'s does not depend on the choice of the cut-off function $h$. By taking the inductive limit $\Oo \to M$, one obtains the full interacting algebra, \cf \cite{BrunettiFredenhagenScalingDegree} for details. This is the \emph{algebraic adiabatic limit}. Time-ordered products taking values in $\A_{S^\ia}(\Oo)$ can be defined by
\[
 \TO_{S^\ia}(F_1, \dots, F_k) \doteq \sum_{j=0}^\infty \frac{i^j}{\hbar^j j!} \Ret(F_1, \dots, F_k; \underbrace{S^\ia(h),  \dots,  S^\ia(h)}_{j \text{ times}}),
\]
where the $F_i$ are supported in $\Oo$.

\section{The principle of perturbative agreement}
\label{sec:PPA}

The principle of perturbative agreement requires that it does not matter whether one puts parts of the free Lagrangean into the interaction Lagrangean. This means that,
for arbitrary local functionals $F_i \in \F_\loc(SM, P, \bar X)$, and arbitrary background fields $\bar X$, $\bar X'$, differing in a compactly supported region, one has
\begin{multline}
\label{eq:PPA}
 \tau_\ret^{\bar X, \bar X'} \TO_{(SM, P, \bar X')} (i_{\bar X', \bar X} F_1, \dots, i_{\bar X', \bar X} F_k) \\ = \Ret_{(SM, P, \bar X)}( F_1, \dots, F_k; e^{i(S_{\bar X'}-S_{\bar X})/\hbar}).
\end{multline}
Here we used the identification of elements of $\F(SM, P, \bar X')$ and $\F(SM, P, \bar X)$ introduced in Section~\ref{sec:BackgroundVariation}.

As such, the above equation is not meaningful, as the \rhs is a formal power series in $S_{\bar X'}-S_{\bar X}$, whereas the \lhs is not. To cure this, we consider the infinitesimal version:
\begin{equation*}
 \delta_\ret^X \TO(\check F_1, \dots, \check F_k) = i \hbar^{-1} \Ret(F_1, \dots, F_k; S^{(1)}(X, f)),
\end{equation*}
where on the \rhs one chooses an arbitrary $f$ which is equal to $1$ on $\supp X$. Using \eqref{eq:Diff_tilde_Phi_Phi}, we may write this for fields in a form similar to the one used in \cite{HollandsWaldStress}:
\begin{multline*}
 \delta_\ret^X \TO(\tilde \Phi_1(t_1), \dots, \tilde \Phi_k(t_k)) = i \hbar^{-1} \Ret(\Phi_1(t_1), \dots, \Phi_k(t_k); S^{(1)}(X, f)) \\ + \sum_j \TO(\Phi_1(t_1), \dots, \Phi_j^{(1)}(X, t_j), \dots, \Phi_k(t_k)).
\end{multline*}
For variations of the gauge background, this means
\begin{multline}
\label{eq:PPA_gauge}
 \delta_\ret^A \TO(\tilde \Phi_1(t_1), \dots, \tilde \Phi_k(t_k)) = i \hbar^{-1} \Ret(\Phi_1(t_1), \dots, \Phi_k(t_k); j(A)) \\ + \sum_j \TO(\Phi_1(t_1), \dots, \Phi_j^{(1)}(A, t_j), \dots, \Phi_k(t_k)).
\end{multline}


\begin{remark}
\label{rem:GaugeTrafo}
Often, we will be considering gauge transformations, i.e., $A = \bar \ud c$, $c \in \Gamma^\infty_c(M, E^0)$, and make use of Lemma~\ref{lemma:deltaMorphism}. In that case, the family of morphisms maps the background connection\footnote{For nonabelian gauge groups, the family $\chi_s$ of gauge transformations generated by $A = \bar \ud c$ is not affine linear in $s$. However, for the present purposes only the linear component matters, so we stick to this inaccurate notation.} $\bar A + s \bar \ud c$ to the background connection $\bar A$, i.e., $\chi_s$ performs the gauge transformation $\bar A + s \bar \ud c \to \bar A$. The corresponding push-forward action on test tensors is $\chi_{s *} t = t + s \rho(c) t$, i.e., we define $\Lie_c t \doteq \rho(c) t$. For a $k$-local field $\Phi$ we define
\begin{equation}
\label{eq:Def_del_c}
 \del_c \Phi(t_1, \dots, t_k) \doteq \Phi^{(1)}(\bar \ud c, t_1, \dots, t_k),
\end{equation}
which is a $k+1$-local field.
\end{remark}

\subsection{Current conservation}
\label{sec:CurrentConservation}

In \cite{HollandsWaldStress} it was shown that the principle of perturbative agreement for variations of the metric background implies the conservation of the stress-energy tensor. We want to perform the corresponding analysis for the gauge current. We define the field
\begin{equation}
\label{eq:Def_delta_j}
 (\bar \delta j)_{(SM, P, \bar X)}(c) \doteq j_{(SM, P, \bar X)}(\bar \ud c), \quad c \in \Gamma_c^\infty(M, E^0). 
\end{equation}
It is easily checked that as an element of $\F(SM, P, \bar X)$, this vanishes on all on-shell configurations. As the grade of this functional is 2, it follows that its image under $\TO$, i.e., $\alpha_H$, \cf \eqref{eq:F_to_A}, is a c-number. The principle of perturbative agreement implies a Ward identity for $\bar \delta j$:

\begin{proposition}
\label{prop:WI}
If $G$ is semisimple and \eqref{eq:PPA_gauge} holds, then for arbitrary fields $\Phi_i$,
\begin{multline}
\label{eq:WardIdentity}
 i \hbar^{-1} \TO(\Phi_1(t_1), \dots, \Phi_k(t_k), (\bar \delta j)(c)) = \\ \sum_j \TO(\Phi_1(t_1), \dots, (\Phi_j(\Lie_c t_j) - \Phi_j^{(1)}(\bar \ud c, t_j)), \dots \Phi_k(t_k)).
\end{multline}
This also holds if $G$ is not semisimple but
\begin{equation}
\label{eq:delta_j}
 \TO((\bar \delta j)(c)) = 0,
\end{equation}
for all $c \in \Gamma^\infty_c(M, E^0)$.
\end{proposition}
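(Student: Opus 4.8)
The plan is to specialize the infinitesimal principle of perturbative agreement \eqref{eq:PPA_gauge} to a pure gauge variation $A = \bar \ud c$ and to show that, under the stated hypotheses, the retarded product on the right hand side collapses into a time-ordered product.

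First I would evaluate the left hand side of \eqref{eq:PPA_gauge} for $A = \bar \ud c$. As explained in Remark~\ref{rem:GaugeTrafo}, the family of backgrounds $\bar A + s \bar \ud c$ is related to $\bar A$ by a family $\chi_s$ of gauge transformations, i.e.\ by morphisms of $\CatGSpMan$ with $\del_s \bar X_s|_{s=0} = \bar \ud c$ and $\Lie_c t = \rho(c) t$. Since the time-ordered product $\TO(\Phi_1, \dots, \Phi_k)$ is a multilocal on-shell quantum field, Lemma~\ref{lemma:deltaMorphism} applies and yields
\[
 \delta_\ret^{\bar \ud c} \TO(\tilde \Phi_1(t_1), \dots, \tilde \Phi_k(t_k)) = \sum_j \TO(\Phi_1(t_1), \dots, \Phi_j(\Lie_c t_j), \dots, \Phi_k(t_k)).
\]
Inserting this into \eqref{eq:PPA_gauge}, using $j(\bar \ud c) = (\bar \delta j)(c)$ from \eqref{eq:Def_delta_j}, and solving for the retarded product, I obtain
\[
 i \hbar^{-1} \Ret(\Phi_1(t_1), \dots, \Phi_k(t_k); (\bar \delta j)(c)) = \sum_j \TO(\Phi_1(t_1), \dots, (\Phi_j(\Lie_c t_j) - \Phi_j^{(1)}(\bar \ud c, t_j)), \dots, \Phi_k(t_k)).
\]
So it remains to replace the retarded product by a time-ordered product.

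To this end I expand \eqref{eq:Ret} to first order in the single second-group entry $(\bar \delta j)(c)$. Writing $\TO(e^{iG})^{\star-1} = 1 - i \TO((\bar \delta j)(c)) + \dots$ with $G = (\bar \delta j)(c)$, and collecting the term that is linear in each $\Phi_i(t_i)$ and in $(\bar \delta j)(c)$, gives
\[
 \Ret(\Phi_1(t_1), \dots, \Phi_k(t_k); (\bar \delta j)(c)) = \TO(\Phi_1(t_1), \dots, \Phi_k(t_k), (\bar \delta j)(c)) - \TO((\bar \delta j)(c)) \star \TO(\Phi_1(t_1), \dots, \Phi_k(t_k)).
\]
Hence the proposition follows once $\TO((\bar \delta j)(c)) = 0$, which is precisely the hypothesis \eqref{eq:delta_j} in the non-semisimple case.

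The main work, and the only genuinely nontrivial step, is therefore to prove $\TO((\bar \delta j)(c)) = 0$ when $G$ is semisimple. As noted before the statement, $(\bar \delta j)(c)$ has grade $2$ and vanishes on on-shell configurations, so its image $\TO((\bar \delta j)(c)) = \alpha_H((\bar \delta j)(c))$ reduces, in $\A^\os$, to the c-number $\hbar \Gamma_{\omega - H}((\bar \delta j)(c))$, \cf \eqref{eq:F_to_A}, \eqref{eq:DefGamma}. Evaluating $\Gamma_{\omega - H}$ on the local kernel of $j(\bar \ud c)$ from \eqref{eq:DefCurrent} contracts the two Dirac-bundle arguments with $\omega - H$ at coinciding points, producing an expression proportional to $\int \tr[\gamma^\mu \rho((\bar \ud c)_\mu(x)) (\omega - H)(x,x)] \ud_{\bar g} x$, which contains the fibrewise trace $\tr_V \rho(\bar \nabla_\mu c)$ over the representation space $V$. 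Since $\bar \nabla_\mu c$ is a section of the adjoint bundle $E^0$ and $\g = [\g, \g]$ for semisimple $\g$, every fibre value is a sum of commutators $[\eta, \zeta]$, whence $\tr_V \rho([\eta, \zeta]) = \tr_V[\rho(\eta), \rho(\zeta)] = 0$ and the whole c-number vanishes. Combining the three steps then gives \eqref{eq:WardIdentity}. I expect the bookkeeping of grading signs in the retarded-product expansion and the precise identification of the trace factor to be the only places demanding real care.
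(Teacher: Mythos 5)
Your derivation of \eqref{eq:WardIdentity} from \eqref{eq:delta_j} is the same as the paper's: specialize \eqref{eq:PPA_gauge} to $A=\bar\ud c$, evaluate the retarded variation via Lemma~\ref{lemma:deltaMorphism} and Remark~\ref{rem:GaugeTrafo}, and convert the retarded product into a time-ordered product using \eqref{eq:RetIdentity}, at which point the only obstruction is the c-number $\TO((\bar\delta j)(c))$. That part is fine.

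The gap is in your proof that $\TO((\bar\delta j)(c))=0$ for semisimple $G$. The c-number $\hbar\Gamma_{\omega-H}((\bar\delta j)(c))$ is $\int (\gamma^\mu)^\alpha_\beta\,\rho(\bar\nabla_\mu c)^a_b\,(\omega-H)_{a\alpha}^{\ \ b\beta}(x,x)\,\ud_{\bar g}x$: the gauge indices of $\rho(\bar\nabla_\mu c)$ are contracted against the coincidence limit of the smooth section $\omega-H$, which is a nontrivial endomorphism (its Hadamard-coefficient part contains, e.g., $\rho(\bar F_{\mu\nu})$ terms, and it is entangled with the spinor structure). The expression therefore does \emph{not} reduce to the character $\tr_V\rho(\bar\nabla_\mu c)$ times something, and the identity $\tr_V\rho([\eta,\zeta])=0$ proves nothing here. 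Indeed, if this c-number vanished for trivial algebraic reasons, the paper would not need to (a) record that the parametrix must be \emph{chosen} so that \eqref{eq:delta_j} holds, (b) keep \eqref{eq:delta_j} as a separate hypothesis in the non-semisimple case, or (c) warn that for chiral models \eqref{eq:delta_j} may be unachievable --- this c-number is precisely the potential anomaly. Note also that your argument never invokes \eqref{eq:PPA_gauge} in the semisimple case, whereas the hypothesis is essential there. The paper's actual route: apply \eqref{eq:PPA_gauge} with $A=\bar\ud c'$ to $\TO((\widetilde{\bar\delta j})(c))$, compare with the morphism computation $\delta_\ret^{\bar\ud c'}\TO((\widetilde{\bar\delta j})(c))=\TO((\bar\delta j)(c'\wedge c))$, antisymmetrize in $c\leftrightarrow c'$, observe that the retarded products combine into a commutator of c-numbers (hence vanish) and that $[\del_{c'},\del_c]=\del_{c'\wedge c}$, to conclude $\TO((\bar\delta j)(c'\wedge c))=0$; semisimplicity then enters only through $\g=[\g,\g]$ to extend this to arbitrary $c$. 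You need to supply an argument of this kind (or an explicit computation of the parametrix's trace anomaly) to close the gap.
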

\begin{proof}
The proof closely follows the proof of \cite[Thm~5.1]{HollandsWaldStress}. We begin by showing \eqref{eq:delta_j} for semisimple $G$.
We consider an infinitesimal gauge transformation $A = \bar \ud c'$, with $c' \in \Gamma_c^\infty(M,E^0)$. It follows from Lemma~\ref{lemma:deltaMorphism}, Remark~\ref{rem:GaugeTrafo} and the fact that $\TO$ applied to a field yields a quantum field that
\[
 \delta_\ret^{\bar \ud c'} \TO((\widetilde{\bar \delta j})(c)) = \TO(( \bar \delta j)(c' \wedge c)).
\]
On the other hand, from \eqref{eq:PPA_gauge}, we have
\[
 \delta_\ret^{\bar \ud c'} \TO((\widetilde{\bar \delta j})(c)) = i \hbar^{-1} \Ret((\bar \delta j)(c); (\bar \delta j)(c')) + \TO((\bar \delta j)^{(1)}(\bar \ud c', c))
\]
We now interchange the role of $c$ and $c'$ and subtract the resulting identity. We obtain
\begin{multline}
\label{eq:nabla_j_aux}
 2 \TO(( \bar \delta j)( c' \wedge c)) = i \hbar^{-1} \Ret((\bar \delta j)(c); (\bar \delta j)(c')) - i \hbar^{-1} \Ret((\bar \delta j)(c'); (\bar \delta j)(c)) \\
 + \TO((\bar \delta j)^{(1)}(\bar \ud c', c)) - \TO((\bar \delta j)^{(1)}(\bar \ud c, c'))
\end{multline}
For the first two terms on the r.h.s., we get
\[
 i \hbar^{-1} [\TO((\bar \delta j)(c)), \TO((\bar \delta j)(c'))]_{\star}.
\]
This is a commutator of c-numbers, so it vanishes.
The last two terms in \eqref{eq:nabla_j_aux} yield
\[
 \TO(\del_{c'} \del_c S(f) - \del_c \del_{c'} S(f)) = \TO(\del_{c' \wedge c} S(f)),
\]
where we choose $f$ to be equal to one on the intersection of the supports of $c$ and $c'$. We used the notation \eqref{eq:Def_del_c} and computed, for some functional $F$ of the background connection, in local coordinates,
\begin{align*}
 \del_{c'} \del_{c} F[\bar A_\mu] & = \del_{c'} \tfrac{\ud}{\ud t} F[\bar A_\mu + t \del_\mu c + t [\bar A_\mu, c]]|_{t=0} \\
 & = \tfrac{\ud}{\ud s} \tfrac{\ud}{\ud t} F[\bar A_\mu  + s \del_\mu c' + s [\bar A_\mu, c'] + t \del_\mu c + t [\bar A_\mu + s \del_\mu c' + s [A_\mu,c'], c]]|_{s=t=0}.
\end{align*}
Taking the difference with $\del_{c} \del_{c'} F[\bar A]$ yields $[\del_{c'}, \del_{c}] = \del_{c' \wedge c}$. Hence, \eqref{eq:nabla_j_aux} gives
\[
 \TO((\bar \delta j)(c' \wedge c)) = 0.
\]
As by assumption $\g$ is semisimple, \eqref{eq:delta_j} follows.

It remains to prove \eqref{eq:WardIdentity}. Using Lemma~\ref{lemma:deltaMorphism}, Remark~\ref{rem:GaugeTrafo}, and \eqref{eq:PPA_gauge} for $A = \bar \ud c$, we obtain
\begin{multline*}
 \sum_j \TO(\Phi_1(t_1), \dots, (\Phi_j(\Lie_c t_j) - \Phi_j^{(1)}(\bar \ud c, t_j)), \dots \Phi_k(t_k)) \\ = i \hbar^{-1} \Ret(\Phi_1(t_1), \dots, \Phi_k(t_k); (\bar \delta j)(c)),
\end{multline*}
so that, using
\begin{multline}
\label{eq:RetIdentity}
 \Ret(\Phi_1(t_1), \dots, \Phi_k(t_k); \Psi(t)) = \TO (\Phi_1(t_1), \dots, \Phi_k(t_k), \Psi(t)) \\ - \TO(\Psi(t)) \star \TO (\Phi_1(t_1), \dots, \Phi_k(t_k))
\end{multline}
and \eqref{eq:delta_j}, we obtain \eqref{eq:WardIdentity}.
\end{proof}

\begin{remark}
We may define the gauge transformation on a functional $F$ by
\[
 (\Lie_c F)(B) \doteq F(\Lie_c B),
\]
where, as in Remark~\ref{rem:GaugeTrafo}, $\Lie_c B = \rho(c) B$.
For a field $\Phi$, we have
\[
 \Phi(\Lie_c t) + \Lie_c (\Phi(t)) - \Phi^{(1)}(\bar \ud c, t) = 0.
\]
Hence, we may write \eqref{eq:WardIdentity} in a form more similar to \cite[Thm.~5.1]{HollandsWaldStress} as
\begin{multline*}
 i \hbar^{-1} \TO(\Phi_1(t_1), \dots, \Phi_k(t_k), (\bar \delta j)(c)) = \\ - \sum_j \TO(\Phi_1(t_1), \dots, \Lie_c (\Phi_j(t_j)), \dots \Phi_k(t_k)).
\end{multline*}
\end{remark}

As shown in \cite{LocCovDirac}, the parametrix can be chosen such that \eqref{eq:delta_j} holds.
Then the remaining ambiguity in the parametrix consists of changes that modify $j$ by adding a locally and covariantly constructed covector $j'$ that is conserved, $\bar \delta j' = 0$. The only such covector is the current $\bar j$ responsible for the background field. In particular, the current is unique in the absence of background currents.

Analogously to the metric variations \cite[Thm.~5.3]{HollandsWaldStress}, there is also a Ward identity for the interacting field. One defines the current as
\[
 j^\ia_{(SM, P, \bar X)}(A) \doteq (S+S^\ia)^{(1)}(A, f)
\]
with $A \in \Gamma_c^\infty(M, E^1)$ and $f=1$ on $\supp A$. Its divergence $\bar \delta j^\ia$ is defined as in \eqref{eq:Def_delta_j}. Then we have:
\begin{proposition}
If \eqref{eq:PPA_gauge} and \eqref{eq:delta_j} holds, then for arbitrary fields $\Phi_i$,
\begin{multline}
\label{eq:WardIdentity_Interacting}
 i \hbar^{-1} \TO_{S^\ia}(\Phi_1(t_1), \dots, \Phi_k(t_k), (\bar \delta j^\ia)(c)) = \\
- \sum_j \TO_{S^\ia}(\Phi_1(t_1), \dots, \Lie_c \Phi_j(t_j), \dots \Phi_k(t_k)).
\end{multline}
\end{proposition}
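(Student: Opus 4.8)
The plan is to reduce \eqref{eq:WardIdentity_Interacting} to the free Ward identity of Proposition~\ref{prop:WI}, exploiting that $\TO_{S^\ia}$ is assembled from the free time-ordered products with insertions of the interaction. From \eqref{eq:Ret} and the definition of $\TO_{S^\ia}$ one obtains the Bogoliubov-type formula
\[
 \TO_{S^\ia}(F_1, \dots, F_k) = \TO(e^{iV/\hbar})^{\star-1} \star \TO(F_1, \dots, F_k, e^{iV/\hbar}), \qquad V \doteq S^\ia(h),
\]
where $h = 1$ on $\overline{\Oo} \supset \supp c$ and $\TO(F_1, \dots, F_k, e^{iV/\hbar}) \doteq \sum_{m} \frac{i^m}{\hbar^m m!} \TO(F_1, \dots, F_k, V^{\otimes m})$. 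Writing $\mathcal{Z} \doteq \TO(e^{iV/\hbar})$, the task reduces to evaluating $\mathcal{Z}^{\star-1} \star \TO(\Phi_1(t_1), \dots, \Phi_k(t_k), (\bar\delta j^\ia)(c), e^{iV/\hbar})$.

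First I would decompose the interacting current. Since $j^\ia = (S + S^\ia)^{(1)}(\,\cdot\,, f)$, definition \eqref{eq:Def_del_c} gives $(\bar\delta j^\ia)(c) = (\bar\delta j)(c) + \del_c S^\ia(f)$. Applying the gauge relation $\Phi^{(1)}(\bar\ud c, t) = \Phi(\Lie_c t) + \Lie_c(\Phi(t))$ from the remark following Proposition~\ref{prop:WI} to $\Phi = S^\ia$ and the scalar test function $f$ (for which $\Lie_c f = 0$) yields $\del_c S^\ia(f) = \Lie_c(S^\ia(f))$. As this functional is supported in $\supp \bar\ud c \subset \supp c$, where $f = h = 1$, one may replace $f$ by $h$, so that $(\bar\delta j^\ia)(c) = (\bar\delta j)(c) + \Lie_c V$.

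Next I would feed the free part $(\bar\delta j)(c)$ into the free Ward identity in exponentiated form. Applying the remark form of \eqref{eq:WardIdentity} to $\TO(\Phi_1(t_1), \dots, \Phi_k(t_k), V^{\otimes m}, (\bar\delta j)(c))$ and summing over $m$ against $\frac{i^m}{\hbar^m m!}$, the $\Lie_c$ on the right hits both the external fields and each of the $m$ identical copies of $V$; the multiplicity $m$ combines with the symmetry factor to make $\Lie_c$ act as a derivation on $e^{iV/\hbar}$, giving
\begin{multline*}
 \TO(\Phi_1(t_1), \dots, \Phi_k(t_k), (\bar\delta j)(c), e^{iV/\hbar}) = i\hbar \sum_a \TO(\dots, \Lie_c(\Phi_a(t_a)), \dots, e^{iV/\hbar}) \\ - \TO(\Phi_1(t_1), \dots, \Phi_k(t_k), \Lie_c V, e^{iV/\hbar}).
\end{multline*}
Adding the $\Lie_c V$ piece of $(\bar\delta j^\ia)(c)$ cancels the last term, leaving $i\hbar \sum_a \TO(\dots, \Lie_c(\Phi_a(t_a)), \dots, e^{iV/\hbar})$. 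Multiplying by $\mathcal{Z}^{\star-1}$ and re-reading through the Bogoliubov formula gives $i\hbar \sum_a \TO_{S^\ia}(\dots, \Lie_c(\Phi_a(t_a)), \dots)$, and multiplication by $i\hbar^{-1}$ produces \eqref{eq:WardIdentity_Interacting}; here $\TO((\bar\delta j)(c)) = 0$ from \eqref{eq:delta_j} and the hypothesis \eqref{eq:PPA_gauge} enter precisely through Proposition~\ref{prop:WI}.

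The crux, and the step I expect to be most delicate, is this cancellation: it requires identifying the gauge variation $\Lie_c V$ of each interaction insertion, produced by the free Ward identity, with the interaction contribution $\del_c S^\ia(f)$ to the divergence of the full current, and it relies on the symmetry factors making $\Lie_c$ a genuine derivation across $e^{iV/\hbar}$. One must additionally track the grading signs when the bosonic insertions $(\bar\delta j)(c)$ and $V$ are commuted past the possibly fermionic $\Phi_i(t_i)$, and verify compatibility of the cutoffs $f$, $h$ with the causally convex region $\Oo$ in the algebraic adiabatic limit $\Oo \to M$.
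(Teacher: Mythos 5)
Your proposal is correct and follows essentially the same route as the paper: both rest on the free Ward identity of Proposition~\ref{prop:WI}, the identification $(\bar\delta j^\ia)(c)-(\bar\delta j)(c)=\Lie_c(S^\ia(f))$, and the combinatorics of summing over insertions of the interaction so that $\Lie_c$ acts as a derivation on $e^{iV/\hbar}$. The only difference is organizational: the paper absorbs the external fields $\Phi_i(t_i)$ into an extended interaction $\hat S^\ia$ and recovers the general case by differentiating with respect to coefficients $\lambda_i$, whereas you keep them explicit throughout, which amounts to the same computation.
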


\begin{proof}
We proceed as in \cite[Thm.~5.3]{HollandsWaldStress}. We first show the result for the case $k=0$. However, to facililate the proof for other $k$, we allow for localized sources in the interaction Lagrangean, i.e., we allow for
\[
 \hat S^\ia(f, t_1, \dots, t_k) = S^{\ia}(f) + \Phi_1(t_1) + \dots + \Phi_k(t_k),
\]
where $S^\ia$ is as before and the test tensors $t_i$ are arbitrary. Note that $\hat S^\ia$ is a sum of fields. 
Let us for simplicity assume that $S^\ia$ and the $\Phi_i$ have even grade. The changes necessary to account for generic grades are straightforward.
By Proposition~\ref{prop:WI}, we have
\begin{multline*}
 i \hbar^{-1} \TO(\underbrace{\hat S^\ia(f, t), \dots, \hat S^\ia(f, t)}_{l \text{ times}}, (\bar \delta j)(c))  = \\ - l \TO \left( \left( (\bar \delta j^\ia)(c) - (\bar \delta j)(c) + \sum_i \Lie_c \Phi_i(t_i) \right), \underbrace{\hat S^\ia(f, t), \dots, \hat S^\ia(f, t)}_{l-1 \text{ times}} \right),
\end{multline*}
where $\hat S^\ia(f, t)$ stands for $\hat S^\ia(f, t_1, \dots, t_k)$. It follows that
\begin{multline*}
 \sum_{l=0}^\infty \frac{i^l}{l! \hbar^l} \TO( (\bar \delta j^\ia)(c), \underbrace{\hat S^\ia(f, t), \dots, \hat S^\ia(f, t)}_{l \text{ times}}) \\
 = - \sum_{l=0}^\infty \frac{i^l}{l! \hbar^l} \sum_i \TO( \Lie_c \Phi_i(t_i), \underbrace{\hat S^\ia(f, t), \dots, \hat S^\ia(f, t)}_{l \text{ times}}).
\end{multline*}
With $k = 0$, this proves \eqref{eq:WardIdentity_Interacting} for $k=0$, namely the vanishing of the divergence of the interacting current. The general result follows from replacing $\Phi_i(t_i)$ by $\lambda_i \Phi_i(t_i)$ and evaluating the derivatives \wrt all $\lambda_i$ at zero.
\end{proof}

\subsection{The background field method}
\label{sec:BFM}

Let us now discuss an application of perturbative agreement, namely the background field method. In \cite{LocCovDirac} it was already used to compute the fermion contribution\footnote{By this we mean, in the usual Feynman graph notation, that all internal lines are fermions.} to the renormalization group flow at the one-loop level. Concretely, we split the gauge connection into a background $\bar A$ and a perturbation $A$. The Dirac Lagrangean at first order in $A$ is just $j(A)$, which is now seen as an interaction term for $A$. However, as we are only interested in the fermion contribution, we may consider $A$ as parameter, instead of a field. We compute the renormalization group flow of this term by the scaling behavior, \cf \cite{HollandsWaldRG}. This gives the renormalization group flow at first order in $A$, which was computed to be \cite{LocCovDirac}
\begin{equation}
\label{eq:CurrentFlow}
 r_\lambda(j(A)) \propto \hbar \int \skal{\bar \ud A}{\bar F}(x) \ud_{\bar g} x.
\end{equation}
Here $\bar F$ is the curvature of the background connection, and the pairing on sections of $E^k$ is fiberwisely defined by
\[
 \skal{[p, \xi] \otimes \omega}{[p, \eta] \otimes \nu} = \kappa(\xi, \eta) \skal{\omega}{\nu}_{\bar g}, \qquad p \in SM+P, \xi, \eta \in \g, \omega, \nu \in \Omega_{\pi(p)}^k.
\]
Here $\kappa$ is the Killing form and $\skal{\cdot}{\cdot}_{\bar g}$ is the pairing of forms induced by the metric $\bar g$. Assuming that the flow does not depend on the splitting into $\bar A$ and $A$, i.e., only depends on $\bar A + A$, this gives the renormalization group flow to all orders of $A$, namely the Yang--Mills action. It remains to show that this assumption is justified. Not surprisingly, it is a consequence of perturbative agreement. To see this, we recall \cite{HollandsWaldRG} that the renormalization group flow of the S matrix for an interaction term $S^\ia$ is determined by $r_\lambda(e^{i S^\ia/\hbar})$,
where $r_\lambda$ are the renormalization maps appearing in \eqref{eq:Scaling}. We now have the following:
\begin{proposition}
$r_\lambda(j(A)^k)$ is a c-number field, which, if \eqref{eq:PPA_gauge} is fulfilled, satisfies
\begin{equation}
\label{eq:r_A}
 r_\lambda(e^{i j(A) / \hbar})^{(1)}(A') = i \hbar^{-1} \left(  r_\lambda(e^{i j(A) / \hbar}, j(A')) - r_\lambda(j(A')) \right).
\end{equation}
\end{proposition}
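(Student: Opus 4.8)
The plan is to treat the two assertions separately, handling the c-number property first and then the identity \eqref{eq:r_A}, which carries the real content.

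For the first assertion, the key observation is that $j(A)$ is \emph{bilinear} in the field, so that adding it to the free action merely shifts the background connection $\bar A \to \bar A + A$ and the ``interacting'' theory remains free. Consequently every connected contribution to $\TO(j(A)^k)$ is a single closed fermion loop with $k$ insertions of the external (c-number) field $A$, carrying no external fermion legs. The associated local renormalization term $r_\lambda(j(A)^k)$ is therefore of grade $0$, i.e.\ a c-number field. I would make this precise through the scaling/dimensional analysis underlying \eqref{eq:Scaling}: $r_\lambda(j(A)^k)$ is a local field of the scaling dimension of $j(A)^k$, covariant in $\bar F$ and $A$, and for $n \leq 4$ the only such fields produced by the fully internal loop are curvature c-numbers, generalizing the explicit $k=1$ computation \eqref{eq:CurrentFlow}.

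For the identity, the guiding principle is that perturbative agreement is compatible with scaling: for a pure connection variation $X = (0,A',0)$ one has $X_\lambda = X$, so by \eqref{eq:delta_sigma} the retarded variation $\delta_\ret^{A'}$ commutes with $\sigma_\lambda$, hence with $S_\lambda$. I would therefore apply $\delta_\ret^{A'}$ to both sides of the (exponentiated) scaling relation \eqref{eq:Scaling} for $\TO(\widetilde{e^{ij(A)/\hbar}})$ and evaluate it in two ways. On one side, commuting $\delta_\ret^{A'}$ through $\lambda^{-d}S_\lambda$ and then invoking perturbative agreement \eqref{eq:PPA_gauge} for the time-ordered exponential yields $\lambda^{-d}S_\lambda$ applied to $i\hbar^{-1}\Ret(e^{ij(A)/\hbar}; j(A'))$ plus a term with a single field-derivative insertion $j(A)^{(1)}(A')$. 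On the other side, applying $\delta_\ret^{A'}$ directly to the right-hand side of the scaling relation produces, again by \eqref{eq:PPA_gauge}, a current-insertion term together with the derivative of the renormalization data, i.e.\ a term containing $r_\lambda(e^{ij(A)/\hbar})^{(1)}(A')$, plus the same derivative insertion $j(A)^{(1)}(A')$.

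Equating the two expressions, the field-derivative insertions $j(A)^{(1)}(A')$ cancel, and after rescaling the retarded and time-ordered products back through \eqref{eq:Scaling} the current-insertion terms assemble into $r_\lambda(e^{ij(A)/\hbar}, j(A'))$; the purely disconnected contribution, in which $j(A')$ is renormalized in isolation (the $\TO(j(A'))\star\TO(\cdots)$ piece of \eqref{eq:RetIdentity}), accounts for the subtraction $-r_\lambda(j(A'))$ in \eqref{eq:r_A}. Since, by the first assertion, all the $r_\lambda$ terms are c-numbers, the identity can be read off at the level of these c-number coefficients, stripped of the $\TO(e^{ij(A)/\hbar})\star(\cdot)$ dressing. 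I expect the main obstacle to be exactly this combinatorial bookkeeping: writing \eqref{eq:Scaling} in exponentiated form for $\TO(e^{ij(A)/\hbar})$ and tracking the $i\hbar^{-1}$ factors and the partition structure carefully enough to verify both the exact cancellation of the $j(A)^{(1)}(A')$ insertions and that the disconnected term is precisely $r_\lambda(j(A'))$. A secondary subtlety is justifying that $\delta_\ret^{A'}$ really commutes with $S_\lambda$ on the relevant compatible families, for which the scaling-invariance $A'_\lambda = A'$ of connection variations is essential.
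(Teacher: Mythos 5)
Your treatment of the identity \eqref{eq:r_A} is essentially the paper's proof: one applies the retarded variation to the exponentiated scaling relation \eqref{eq:Scale_T_j} in two orders, uses \eqref{eq:PPA_gauge} and \eqref{eq:RetIdentity} on each side, invokes the commutation of $\delta_\ret$ with $S_\lambda$ (which for connection variations follows from \eqref{eq:ScaledField} and \eqref{eq:delta_sigma}), and reads off the c-number coefficients of $\TO\bigl(\exp(ij(A_\lambda)/\hbar + r_\lambda(e^{ij(A_\lambda)/\hbar}))\bigr)$. One simplification you miss: the terms $j(A)^{(1)}(A')$ that you propose to cancel between the two sides are identically zero, because the functional \eqref{eq:DefCurrent} contains no covariant derivatives and hence no dependence on the background connection. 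Keeping them is not wrong, but it forces you to verify that the two insertion sums really coincide after scaling, which is exactly the kind of bookkeeping you flag as the main obstacle and which the paper avoids entirely.

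The first half of your argument has a genuine gap. The claim that ``every connected contribution to $\TO(j(A)^k)$ is a single closed fermion loop \dots carrying no external fermion legs'' is false: for $k\geq 2$ the connected part of $\TO(j(A)^k)$ also contains partially contracted terms of grade $2,4,\dots$ (e.g.\ for $k=2$ a single propagator joining the two vertices, leaving two external fermion legs), so you cannot conclude that the counterterm has grade $0$ from this. The dimensional fallback you offer does not close the gap either: for $k=1$ a grade-two counterterm proportional to $\hbar\, j(A)$ itself is dimensionally admissible, and the restriction to $n\leq 4$ is beside the point. The paper's argument is purely structural and dimension-independent: by Remark~\ref{rem:Renormalization} the maps $r_\lambda$ satisfy \textbf{Expansion} and vanish whenever one entry is a linear functional; since $j(A)$ is quadratic, $j(A)^{(1)}(x)$ is linear, so every functional derivative of $r_\lambda(j(A)^k)$ is a sum of $r_\lambda$'s with one linear entry and therefore vanishes, which is precisely the statement that $r_\lambda(j(A)^k)$ is a c-number. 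Your loop heuristic points at the right phenomenon, but the proof has to go through these axiomatized properties of the renormalization maps rather than through a diagrammatic or dimensional count.
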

\begin{proof}
To see that $r_\lambda(j(A)^k)$ is a c-number field, recall that $r_\lambda$ fulfills the requirements discussed in Remark~\ref{rem:Renormalization}. In particular, it vanishes if one of the arguments is a linear field and fulfills {\bf Expansion}. As $j(A)$ is quadratic, it follows that the functional derivative of $r_\lambda(j(A)^k)$ vanishes, i.e., it is a c-number.

In order to prove the second statement, we note that, as a consequence of the scaling formula \eqref{eq:Scaling}, we have, with $A_\lambda \doteq \lambda^{n} A$,
\begin{equation}
\label{eq:Scale_T_j}
 S_\lambda \TO \left( e^{i j(A) / \hbar} \right) = \TO \left( \exp \left( i j(A_\lambda) / \hbar + r_\lambda(e^{i j(A_\lambda) / \hbar}) \right) \right).
\end{equation}
Furthermore, by \eqref{eq:PPA_gauge},
\[
 \delta_\ret^{A'} \TO \left( e^{i \tilde \jmath(A) / \hbar } \right) = i \hbar^{-1} \Ret \left( \exp \left( i j(A) / \hbar \right); j(A') \right),
\]
where we used that the $j(A)$ are independent of the background connection. Using \eqref{eq:RetIdentity} and
\[
 \TO \left( e^{ i j(A) / \hbar }, j(A') \right) = - i \hbar \frac{\ud}{\ud s} \TO \left( e^{ i j(A + s A') / \hbar } \right) |_{s=0},
\]
we obtain
\begin{multline*}
 S_\lambda \Ret \left( e^{ i j(A) / \hbar }; j(A') \right) = \TO \left( \exp \left( i j(A_\lambda) / \hbar + r_\lambda(e^{i j(A_\lambda) / \hbar}) \right), j(A'_\lambda) \right) \\
 + \TO \left( \exp \left( i j(A_\lambda) / \hbar + r_\lambda(e^{i j(A_\lambda) / \hbar}) \right), r_\lambda(e^{i j(A_\lambda) / \hbar}, j(A'_\lambda)) \right) \\
 - \left( \TO(j(A'_\lambda)) + r_\lambda(j(A'_\lambda)) \right) \star  \TO \left( \exp \left( i j(A_\lambda) / \hbar + r_\lambda(e^{i j(A_\lambda) / \hbar}) \right) \right),
\end{multline*}
and hence
\begin{multline}
\label{eq:S_delta}
 -i \hbar S_\lambda \delta_\ret^{A'} \TO \left( e^{i \tilde \jmath(A) / \hbar } \right) = \Ret \left( \exp \left( i j(A_\lambda) / \hbar + r_\lambda(e^{i j(A_\lambda) / \hbar}) \right), j(A'_\lambda) \right) \\
 + \left(  r_\lambda(e^{i j(A_\lambda) / \hbar}, j(A'_\lambda)) - r_\lambda(j(A'_\lambda)) \right) \TO \left( \exp \left( i j(A_\lambda) / \hbar + r_\lambda(e^{i j(A_\lambda) / \hbar}) \right) \right).
\end{multline}
On the other hand, due to \eqref{eq:PPA_gauge} and \eqref{eq:Scale_T_j}, we have
\begin{multline}
\label{eq:delta_S}
 \delta_\ret^{A'_\lambda} S_\lambda \TO \left( e^{i \tilde \jmath(A) / \hbar} \right) =
i \hbar^{-1} \Ret \left( \exp \left( i j(A_\lambda) / \hbar + r_\lambda(e^{i j(A_\lambda) / \hbar}) \right), j(A'_\lambda) \right) \\
+  r_\lambda(e^{i j(A_\lambda) / \hbar})^{(1)}(A'_\lambda) \TO \left( \exp \left( i j(A_\lambda) / \hbar + r_\lambda(e^{i j(A_\lambda) / \hbar}) \right) \right).
\end{multline}
It follows from \eqref{eq:ScaledField} and \eqref{eq:delta_sigma} that
\[
 \delta_\ret^{A_\lambda} S_\lambda = S_\lambda \delta_\ret^A,
\]
which proves \eqref{eq:r_A} upon comparison of \eqref{eq:S_delta} and \eqref{eq:delta_S}.
\end{proof}

Using the definition \eqref{eq:DefFieldDerivative} of the derivative of a field \wrt the background geometry, we may write \eqref{eq:r_A} as
\begin{equation*}
 \tfrac{\ud}{\ud s} i_{\bar X, \bar X - s A'} \tilde r_\lambda(e^{i j(A + s A') / \hbar})|_{s=0} = i \hbar^{-1} r_\lambda(j(A')).
\end{equation*}
Hence, we may compute the renormalization group flow of the interaction term $j(A)$ due to the fermions as
\begin{align*}
 r_{\lambda, \bar A}(e^{i j(A) / \hbar}) & = \int_0^1 \tfrac{\ud}{\ud t} i_{\bar X, \bar X + (1-t) A} \tilde r_{\lambda, \bar A+(1-t)A}(e^{i j(t A) / \hbar})|_{t=s} \ud s \\
 & = i \int_0^1 \skal{(\bar \ud + s A \wedge) A}{\bar F + s \bar \ud A + \tfrac{s^2}{2} A \wedge A} \ud s,
\end{align*}
where we used \eqref{eq:CurrentFlow} and $F[\bar A + A] = \bar F + \bar \ud A + \frac{1}{2} A \wedge A$. Carrying out the integration, we obtain $\frac{1}{2} (\skal{F[\bar A+A]}{F[\bar A+A]} - \skal{\bar F}{\bar F})$, i.e., the Yang-Mills action, up to an $A$-independent term. Hence, one can infer the fermion contribution to the renormalization group flow from the flow at first order in $A$. We note that the latter is given by a coinciding point limit of Hadamard coefficients \cite{LocCovDirac}. These also appear in the heat kernel expansion (as Seeley-de Witt coefficients), which establishes a relation to the heat kernel method.


\subsection{Fulfillment of perturbative agreement}
\label{sec:FulfillmentPPA}
We now want to investigate whether the principle of perturbative agreement can be satisfied. We proceed analogously to \cite{HollandsWaldStress}. There, one defines the deviation
\begin{align}
\label{eq:defD}
 D_k(A, t_1, \dots, t_k) & \doteq \delta_\ret^A \TO(\tilde \Phi_1(t_1), \dots, \tilde \Phi_k(t_k)) \\
 & - i \hbar^{-1} \Ret(\Phi_1(t_1), \dots, \Phi_k(t_k); j(A)) \nonumber \\
 & - \sum_j \TO(\Phi_1(t_1), \dots, \Phi_j^{(1)}(A, t_j), \dots, \Phi_k(t_k)). \nonumber
\end{align}
and then proceeds inductively in the total grade $N$ of the fields $\Phi_i$. For $N=0$, only $k=0$ is nontrivial, but $D_0 = 0$ identically. One assumes that $D_k = 0$ for all fields with total grade smaller than $N$. The idea is then to modify the time-ordered product $\TO(\Phi_1(t_1), \dots, \Phi_k(t_k), j(A))$ appearing in the second term on the \rhs such that the modified $D_k$ vanishes. Such a modification is only possible if the $D_k$ fulfill a couple of properties, \cf the discussion of the renormalization freedom in Remark~\ref{rem:Renormalization}. We proceed in close analogy to \cite{HollandsWaldStress}, only highlighting the differences.

As in \cite{HollandsWaldStress}, one shows that $D_k$ is a c-number supported on the total diagonal $\diag_{k+1}$. It is local and covariant, scales almost homogeneously, and vanishes if one of the $\Phi_i$ is a linear field. We also have
\[
 \overline{D_k(A, t_1, \dots, t_k)} = (-1)^{k+1} D_k^*(A, t_k^*, \dots, t_1^*),
\]
where $D_k^*(A, t_k^*, \dots, t_1^*)$ is defined as in \eqref{eq:defD}, but with $\Phi_1(t_1), \dots, \Phi_k(t_k)$ replaced by $\Phi_k(t_k)^*, \dots, \Phi_1(t_1)^*$. This follows from the commutation of $\delta^A_\ret$ with the adjoint, \cf Proposition~\ref{prop:tau_isomorphism}, {\bf Unitarity} of the time-ordered products, and the fact that for lower total grade we have $D_k = 0$.

The next step is to show that
\begin{equation}
\label{eq:D_WF_1}
 \WF(D_k)|_{\diag_{k+1}} \perp T \diag_{k+1}.
\end{equation}
Furthermore, $D_k$ depends smoothly and (if applicable) analytically on the background fields, i.e., for a smooth (analytic) family $\mathcal{P} \ni s \mapsto \bar X_s$ of backgrounds, with $\mathcal{P}$ a finite dimensional parameter space, we have
\begin{equation}
\label{eq:D_WF_2}
 \WF(D^s_k)|_{\mathcal{P} \times \diag_{k+1}} \perp T (\mathcal{P} \times \diag_{k+1}).
\end{equation}
For the second and third term in $D_k$, this follows from the properties of the retarded product.
For the first term, one restricts to a sufficiently small neighborhood of $\diag_{k+1}$, where one can write
\begin{multline*}
 \TO(\Phi_1(t_1), \dots, \Phi_k(t_k)) = \\ \int w(y_1, \dots, y_k; x_1, \dots x_N) \WDp{\Psi(x_1) \dots \Psi(x_N)}_H \prod_i \ud_{\bar g} x_i \prod_j t(y_j) \ud_{\bar g} y_j,
\end{multline*}
where $\WDp{\cdot}_H$ denotes Hadamard normal ordering \cite{HollandsWaldStress, LocCovDirac}, and $w$ is a distributional section. $\Psi$ stands for either $\psi$ or $\psi^+$ in the notation used in \eqref{eq:CurrentSuccinct}. The dependence on the background resides in $w$ and the Hadamard normal ordered product. The variation \wrt the background thus gives two terms, one from the variation of $w$ and one from the variation of the normal ordered product.\footnote{For metric variations, also the volume elements have to be varied, leading to $\delta(y_j, y)$ or $\delta(x_i,y)$ above. Knowing the wave front set of $w$ and the Hadamard ordered product, this fulfills \eqref{eq:D_WF_1} and \eqref{eq:D_WF_2}.}
Both fulfill the conditions \eqref{eq:D_WF_1} and \eqref{eq:D_WF_2} independently.
Let us first discuss the variation of the normal ordered product. As the treatment in \cite[Section 6.2.5]{HollandsWaldStress} contains a few mistakes, we sketch how to correct these and how to implement the changes necessary to treat Dirac spinors. For the wave front set of $w$, one knows from the {\bf Microlocal spectrum condition} that
\begin{multline}
\label{eq:WF_s}
 \WF(w)|_{\diag_k} \subset \\ \{ (y,p_1; \dots y,p_k; x_1, k_1; \dots x_N, k_N) | x_i = y \ \forall i, \sum p_i + \sum k_j = 0 \}.
\end{multline}
In order to treat the variation of the Hadamard normal ordered product, one evaluates it in a suitable family $\omega_s$ of quasi-free Hadamard states (whose two-point functions will be denoted by the same symbol). Pick a Hadamard two-point function $\omega$ for $s=0$ and define the corresponding family $\omega_s$ as in Definition~\ref{def:tau}. Define
\[
 d_s \doteq \omega_s - H_s,
\]
where $H_s$ is the parametrix for the background $\bar X_s$. Then
\[
 \omega \left( \tau_\ret^{\bar X, \bar X_s} \WDp{\Psi(x_1) \dots \Psi(x_N)}_{H_s} \right) = \prod_{\text{pairs } ij} d_s(x_i, x_j),
\]
so that the part of the variation of $\omega(\TO(\Phi_1(t_1), \dots, \Phi_k(t_k)))$ that comes from the variation of the Hadamard ordered products can be written as
\[
 \int w_0(y_1, \dots, y_k; x_1, \dots x_r) \del_s \prod_{\text{pairs } ij} d_s(x_i, x_j)|_{s=0} \prod_i \ud_{\bar g} x_i \prod_j t(y_j) \ud_{\bar g} y_j.
\]
If $X$ is the infinitesimal variation corresponding to $\bar X_s$, then one defines
\[
 (\delta d)(X, u_1, u_2) \doteq \del_s d_s(u_1, u_2)|_{s=0}.
\]
According to the above, the wave front set of $\delta d$ at the diagonal is crucial for the determination of the wave front set of $D_k$. It is characterized by the following lemma, which corresponds to \cite[Lemma~6.2]{HollandsWaldStress}:
\begin{lemma}
\label{lemma:Smoothness}
For a sufficiently small causal domain $U \subset M$, with the support $K$ of the background variation contained in $U$, $d_s(x_1, x_2)$ is jointly smooth in $(s, x_1, x_2)$ for $x_i \in U$. Furthermore,
\begin{equation}
\label{eq:WF_delta_d}
 \WF(\delta d) |_{\diag_3} \subset \{ (y,p; x_1, k_1; x_2, k_2) | y=x_1=x_2, p+k_1+k_2 = 0  \}.
\end{equation}
\end{lemma}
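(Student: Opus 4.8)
The plan is to establish the two assertions in turn, first the joint smoothness of $d_s$ and then the wave front set bound on its variation $\delta d$. For every fixed $s$ both $\omega_s$ and $H_s$ are Hadamard bi-distributions for the background $\bar X_s$ — $\omega_s$ by the remark following Definition~\ref{def:tau} and $H_s$ by Definition~\ref{def:Parametrix} — so they share the universal singularity $C_+$ and $d_s = \omega_s - H_s$ is smooth in $(x_1,x_2)$ on $U^2$. This is the standard fact that the difference of two Hadamard bi-solutions is smooth, and it is the base case.

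To upgrade this to joint smoothness in $(s,x_1,x_2)$ I would exploit the causal structure rather than a direct wave front set cancellation. By \eqref{eq:MollerMap} the Møller operator is the identity outside $J^+(K)$, so $\omega_s = \omega$ in the past of $K$; since the parametrix is built locally from the unchanged background there, also $H_s = H_0$ in the past of $K$, whence $d_s = d_0$ is $s$-independent and smooth, with $s$-independent Cauchy data on a Cauchy surface $\Sigma$ chosen in the past of $K$. As $\omega_s$ is a bi-solution and $H_s$ a parametrix, $d_s$ solves $D^\oplus_s d_s = \rho_s$ (in each variable separately) with $\rho_s$ the parametrix remainder, which is \emph{jointly} smooth in $s$ because the Hadamard coefficients depend smoothly on the smooth family $\bar X_s$. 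Smooth dependence of the solution of the hyperbolic Cauchy problem on its (jointly smooth) source and its ($s$-independent, smooth) data then yields joint smoothness of $d_s$, iterating the argument over the two spacetime slots.

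For the wave front set of $\delta d$, viewed now as a distribution in $(y,x_1,x_2)$ with $y$ the point at which the background is varied, I would split $\delta d = \del_s \omega_s|_{s=0} - \del_s H_s|_{s=0}$. The term $\del_s H_s|_{s=0}$ is controlled directly by the local-covariant construction: the Hadamard coefficients are universal local polynomials in the background fields and their derivatives, so their variation produces kernels localized on the diagonal with $\WF \perp T\diag_3$. For $\del_s \omega_s|_{s=0}$ I would use $\del_s r^{\bar X_s,\bar X}|_{s=0} = S^\oplus_\ret \circ \dot D^\oplus$, where $\dot D^\oplus$ is the variation of the Dirac operator, a differential operator of order $\le 1$ supported in $K$, so that $\del_s \omega_s|_{s=0}(u_1,u_2) = \omega(\dot r^t u_1,u_2)+\omega(u_1,\dot r^t u_2)$ with $\dot r^t \sim \dot D^{\oplus t}\circ S^\oplus_\adv$. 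Differentiating the equation $D^\oplus_s d_s = \rho_s$ shows that $\delta d$ solves the Dirac equation up to a source conormal to $\{y=x_i\}$ (coming from $\dot D^\oplus d_0$, with $d_0$ smooth) and vanishes in the past of $K$; propagation of singularities \cite{HoermanderI} then confines $\WF(\delta d)$ to the characteristic set and, at the total diagonal, forces the conormal relation $p+k_1+k_2=0$.

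The main obstacle — and precisely the point where the treatment in \cite{HollandsWaldStress} must be corrected — is the wave front set estimate for $\del_s \omega_s|_{s=0}$ near the total diagonal: a naive composition of the lightlike singularities of $\omega$ with those of $S^\oplus_{\ret/\adv}$, sourced in $K$, appears to admit covectors with $p+k_1+k_2 \neq 0$. The resolution is to use the cancellation against $\del_s H_s|_{s=0}$: the dangerous, non-conormal contributions are exactly the $C_+$-Hadamard ones, which are common to $\omega_s$ and $H_s$ for every $s$ and therefore absent from $\delta d$, leaving only the conormal remainder. Making this cancellation precise and combining it with the propagation-of-singularities input and the vanishing of $\delta d$ in the past of $K$ is the technical heart of the argument.
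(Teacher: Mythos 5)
Your first part---joint smoothness of $d_s$---follows essentially the paper's route: you use that $d_s$ satisfies the Dirac equation in each slot up to the jointly smooth parametrix remainder (the paper's $G^i_s$), that $d_s$ is $s$-independent and smooth in the past of $K$, and you then solve the inhomogeneous Cauchy problem forward. The paper makes this precise through the explicit Duhamel-type formula \eqref{eq:d_Integral} and the wave front set calculus (the nontrivial check being that $\WF(\chi_N S^\oplus_\adv)$ contains no covectors with vanishing first component, so the composition with the smooth $G^i$ is licit), but your sketch is the same argument and I see no obstruction to completing it.

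The second part contains a genuine gap, and it sits exactly at the point the paper flags as the correction to \cite[eq.~(230)]{HollandsWaldStress}. You assert that $\del_s H_s|_{s=0}$ is ``localized on the diagonal with $\WF\perp T\diag_3$'' because the Hadamard coefficients are ``universal local polynomials in the background fields.'' This is false: the coefficients $V_k(x_1,x_2)$ and the squared geodesic distance $\Gamma(x_1,x_2)$ are determined by transport equations that integrate the geometric data along the geodesic joining $x_1$ and $x_2$, so their functional derivatives with respect to the background at a point $y$ are supported on $\{y=z_{x_1,x_2}(s),\ 0\le s\le 1\}$, not on $\diag_3$. The resulting wave front set is the set $W$ of \eqref{eq:WF_delta_G}, which carries covectors at interior points of the connecting geodesic with $k_1=(s-1)\Pi_z p$, $k_2=-s\Pi_z p$; establishing this bound (via the wave front set of $\int_0^1\delta(y,z_{x_1,x_2}(s))\,\ud s$) is the technical core of the lemma, and it is precisely what your proposal omits. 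Consequently, when you differentiate the equation for $d_s$, the source is not merely the conormal term $\dot D^\oplus d_0$ but also $\delta G^i$ with wave front set in $W$, and your propagation-of-singularities step is fed the wrong input. Your fallback---that the ``dangerous'' non-conormal contributions to $\del_s\omega_s$ and $\del_s H_s$ cancel because both share the Hadamard singularity---is a plausible heuristic but is never carried out, and you yourself identify it as the unproven heart of the argument. The paper avoids this entirely: it never differentiates $\omega_s$ and $H_s$ separately, but differentiates the formula \eqref{eq:d_Integral} for the already-regular difference $d$, and then composes $\WF(\delta G_i)\subset W$, the bound on $\WF(\delta S^\oplus_\adv)$ obtained from $\delta S^\oplus_\adv=-S^\oplus_\adv\circ\delta D^\oplus\circ S^\oplus_\adv$, and the vanishing of $\delta d$ near $S^-\times S^-$, to land on \eqref{eq:WF_delta_d}. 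Without the estimate on $\delta V_k$ and $\delta\Gamma$, your argument does not close.
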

A causal domain is a globally hyperbolic open subset, whose closure is contained in a geodesically convex region.
As the proof given in \cite{HollandsWaldStress} contains a mistake, we provide a corrected proof in Appendix~\ref{app:Smoothness}. It also includes the changes that are necessary to treat the Dirac equation instead of the Klein-Gordon equation. With this lemma and the knowledge about the wave front set of $w_0$, \cf \eqref{eq:WF_s}, it is then straightforward to prove \eqref{eq:D_WF_1} and \eqref{eq:D_WF_2}. For the treatment of the variation of $w$, we refer to \cite[Section 6.2.5]{HollandsWaldStress}, noting however that there equation (238) has to be corrected, analogously to the correction of equation (230) that we give in \eqref{eq:WF_delta_G}.


The next step is to show that the $D_k$ are symmetric if one of the fields is the current, i.e., for $\Phi_1 = j$,
\begin{equation}
\label{eq:D_k_sym}
 D_k(A_1, A_2, t_2, \dots, t_k) - D_k(A_2, A_1, t_2, \dots, t_k) = 0.
\end{equation}
In fact, it will turn out that the problem can be reduced to the case $k=1$. We denote the \lhs of this equation for $k=1$ by $E(A_1, A_2)$. We have
\begin{equation}
\label{eq:Def_E}
 E(A_1, A_2) = \delta_\ret^{A_1} \TO( \tilde \jmath (A_2)) - \delta_\ret^{A_2} \TO( \tilde \jmath (A_1)) + i \hbar^{-1} [\TO(j(A_1)), \TO(j(A_2))],
\end{equation}
where we used that $S^{(2)}(A_1, A_2, f)$ is symmetric in $A_1$, $A_2$ (in fact, for the Dirac field $S$ is linear in the connection, so the second derivative vanishes anyway).

\begin{proposition}
If $E$ vanishes and $D_j$ vanishes for all $j$ and all total grades $N\leq \sum_{i=2}^k \betrag{\Phi_i}$, then \eqref{eq:D_k_sym} is fulfilled for all $k$.
\end{proposition}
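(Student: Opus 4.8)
The plan is to substitute the definition \eqref{eq:defD} into the left-hand side of \eqref{eq:D_k_sym}, with $\Phi_1 = j$ and the current's test tensor taken to be $A_2$ in the first summand and $A_1$ in the second, and then to organize the antisymmetrization $A_1 \leftrightarrow A_2$ into three groups: the difference of retarded variations $\delta_\ret^{A_1} \TO(\tilde\jmath(A_2), \Phi_2(t_2), \dots) - \delta_\ret^{A_2} \TO(\tilde\jmath(A_1), \Phi_2(t_2), \dots)$, the difference of the retarded products $\Ret(j(A_i), \Phi_2(t_2), \dots; j(A_{i'}))$, and the difference of the insertion terms $\TO(\dots, \Phi_j^{(1)}(A_i, t_j), \dots)$. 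First I would dispose of the insertion of the current's own background derivative: since $j(A) = S^{(1)}(A,f)$, one has $j^{(1)}(A_1, A_2) = S^{(2)}(A_1, A_2, f)$, which is symmetric in $A_1, A_2$ (and in fact vanishes, $S$ being linear in the connection), so this term drops out of the antisymmetrization immediately.

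Next I would exploit the inductive hypothesis. Because $D_j = 0$ for every collection of fields of total grade at most $\sum_{i=2}^k \betrag{\Phi_i}$, the principle of perturbative agreement \eqref{eq:PPA_gauge} holds \emph{verbatim} for the spectator collection $\Phi_2, \dots, \Phi_k$, and also for every collection obtained from it by replacing one $\Phi_j$ by $\Phi_j^{(1)}(A, t_j)$, which carries the same grade. The key reduction step is to use this, together with the retarded-product identity \eqref{eq:RetIdentity}, to eliminate $\TO(j(A_2), \Phi_2(t_2), \dots)$ in favour of a variation of the spectators, namely $i\hbar^{-1}\TO(j(A_2), \Phi_2(t_2), \dots) = \delta_\ret^{A_2} \TO(\tilde\Phi_2(t_2), \dots) - \sum_{j\geq 2} \TO(\dots, \Phi_j^{(1)}(A_2, t_j), \dots) + i\hbar^{-1} \TO(j(A_2)) \star \TO(\Phi_2(t_2), \dots)$. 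Applying $\delta_\ret^{A_1}$ to this relation, read as a compatible family in the $A_1$-direction, and distributing it over the $\star$-product with the Leibniz rule \eqref{eq:delta_Leibniz}, converts the first group into a double retarded variation $\delta_\ret^{A_1}\delta_\ret^{A_2}\TO(\tilde\Phi_2(t_2),\dots)$ plus variations of the spectator-insertion terms plus the variation of a $\star$-product.

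Then I would antisymmetrize and collect. The double retarded variation $\delta_\ret^{A_1}\delta_\ret^{A_2}\TO(\tilde\Phi_2(t_2),\dots)$ is symmetric under $A_1 \leftrightarrow A_2$ — the space of connections being affine, the two-parameter family $\bar A + s_1 A_1 + s_2 A_2$ has commuting parameter derivatives, and the retarded M\o ller operators compose, $\tau_\ret^{\bar X, \bar X'}\tau_\ret^{\bar X', \bar X''} = \tau_\ret^{\bar X, \bar X''}$, so the mixed second variation is order-independent — hence it cancels. The variations of the spectator-insertion terms, upon re-applying the inductive \eqref{eq:PPA_gauge}, reproduce exactly the third (insertion) group of the antisymmetrization with opposite sign and so cancel it, while the genuinely second-order pieces $\Phi_i^{(1)}(A_1), \Phi_j^{(1)}(A_2)$ are symmetric in $A_1, A_2$ and drop. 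What survives is $i\hbar^{-1}$ times the antisymmetrization of $\bigl(\delta_\ret^{A_1}\TO(\tilde\jmath(A_2)) - \delta_\ret^{A_2}\TO(\tilde\jmath(A_1))\bigr)\star\TO(\Phi_2(t_2),\dots)$ together with the $\star$-commutator $i\hbar^{-1}[\TO(j(A_1)), \TO(j(A_2))]$ generated from the retarded-product group via \eqref{eq:RetIdentity}; comparing with \eqref{eq:Def_E}, these combine precisely into $E(A_1, A_2) \star \TO(\Phi_2(t_2), \dots, \Phi_k(t_k))$. Since $E$ is a c-number (being the $k=1$ instance of the c-number $D$), the $\star$ reduces to an ordinary product, and the hypothesis $E = 0$ forces \eqref{eq:D_k_sym}. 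This realizes the reduction to the case $k=1$.

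The hard part will be the bookkeeping in the collection step: tracking the grading signs and the left/right ordering of the $\star$-factors while verifying that every contribution except the $E$-term genuinely cancels. The real content, however, is the symmetry of the mixed double retarded variation $\delta_\ret^{A_1}\delta_\ret^{A_2}$ on a compatible family; this is the analogue of the integrability condition invoked in \cite{HollandsWaldStress}, and for connection variations it should follow from the affine structure of the space of connections and the composition law for retarded M\o ller operators, but one must check that no curvature term obstructs the symmetry. Once this integrability and the c-number nature of $E$ are secured, the reduction to $k=1$ is forced and the conclusion is immediate.
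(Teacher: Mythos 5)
Your proposal follows essentially the same route as the paper's proof: the same three-group decomposition of the antisymmetrized $D_k$, elimination of the $S^{(2)}$ term by symmetry, use of the inductive vanishing of the lower-grade $D_j$ together with \eqref{eq:RetIdentity} to convert $\TO(j(A_2),\Phi_2(t_2),\dots)$ into retarded variations of the spectators, cancellation of the symmetric double variation $\delta_\ret^{A_1}\delta_\ret^{A_2}$ and of the insertion terms, and collection of the survivors into $E(A_1,A_2)\star\TO(\Phi_2(t_2),\dots,\Phi_k(t_k))$. The paper likewise simply asserts that the retarded variations commute, so your flagged concern about the mixed second variation is the same (unelaborated) step taken there.
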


\begin{proof}
The statement is only nontrivial for $k \geq 2$. In that case, we have
\begin{align}
\label{eq:Line_1}
 \text{\eqref{eq:D_k_sym}} & = \delta_\ret^{A_1} \TO( \tilde \jmath (A_2), \tilde \Phi_2(t_2), \dots, \tilde \Phi_k(t_k)) - (1 \leftrightarrow 2) \\
\label{eq:Line_2}
 & - i \hbar^{-1} \Ret(j(A_2), \Phi_2(t_2), \dots, \Phi_k(t_k); j(A_1)) - (1 \leftrightarrow 2) \\
\label{eq:Line_3}
 & - \sum_{j=2}^k \TO(j(A_2), \Phi_2(t_2), \dots, \Phi_j^{(1)}(A_1 * t_j), \dots, \Phi_k(t_k)) - (1 \leftrightarrow 2),
\end{align}
where we again used that $S^{(2)}(A_1, A_2, f)$ is symmetric in $A_1, A_2$. We may replace the second line by
\begin{equation}
\label{eq:Line_2_1}
 \text{\eqref{eq:Line_2}} = i \hbar^{-1} \TO(j(A_1)) \star \TO(j(A_2), \Phi_2(t_2), \dots, \Phi_k(t_k)) - (1 \leftrightarrow 2).
\end{equation}
The first line may be rewritten as
\begin{align}
\label{eq:Line_1_1}
 \text{ \eqref{eq:Line_1} } & = \delta_\ret^{A_1} \Ret(\tilde \Phi_2(t_2), \dots, \tilde \Phi_k(t_k); \tilde \jmath(A_2)) - (1 \leftrightarrow 2) \\
\label{eq:Line_1_2}
 & + \delta_\ret^{A_1} (\TO(\tilde \jmath(A_2)) \star \TO(\tilde \Phi_2(t_k), \dots, \tilde \Phi_k(t_k))) - (1 \leftrightarrow 2).
\end{align}
Due to the assumption on the vanishing of $D_j$, \eqref{eq:Line_1_1} can be rewritten as
\begin{align}
\label{eq:Line_1_1_1}
 \text{ \eqref{eq:Line_1_1} } & = -i \hbar \delta_\ret^{A_1} \delta_\ret^{A_2} \TO(\tilde{\tilde \Phi}_2(t_2), \dots, \tilde{\tilde \Phi}_k(t_k)) - (1 \leftrightarrow 2) \\
\label{eq:Line_1_1_2}
  & + i \hbar \sum_{j=2}^k \delta_\ret^{A_1} \TO(\tilde \Phi_2(t_2), \dots, \tilde \Phi^{(1)}_j(A_2 * t_j), \dots, \tilde \Phi_k(t_k)) - (1 \leftrightarrow 2).
\end{align}
Here $\tilde{\tilde \Phi}$ denotes the two-parameter family corresponding to variations along $A_1$ and $A_2$. The retarded variations commute, so that the first line \eqref{eq:Line_1_1_1} vanishes. The line \eqref{eq:Line_1_2} may be expanded by the Leibniz rule \eqref{eq:delta_Leibniz} to
\begin{align}
\label{eq:Line_1_2_1}
 \text{ \eqref{eq:Line_1_2} } & = \delta_\ret^{A_1} \TO(\tilde \jmath(A_2)) \star \TO(\Phi_2(t_2), \dots, \Phi_k(t_k)) - (1 \leftrightarrow 2) \\
\label{eq:Line_1_2_2}
  & + \TO(j(A_2)) \star \delta_\ret^{A_1} \TO(\tilde \Phi_2(t_2), \dots, \tilde \Phi_k(t_k)) - (1 \leftrightarrow 2).
\end{align}
Using again the assumption on the $D_j$, \eqref{eq:Line_1_2_2} can be written as
\begin{align}
\label{eq:Line_1_2_2_1}
 \text{ \eqref{eq:Line_1_2_2} } & = i \hbar^{-1} \TO(j(A_2)) \star \Ret(\Phi_2(t_2), \dots, \Phi_k(t_k); j(A_1)) - (1 \leftrightarrow 2) \\
\label{eq:Line_1_2_2_2}
  & + \TO(j(A_2)) \star \sum_{j=2}^k \TO(\Phi_2(t_2), \dots, \Phi^{(1)}_j(A_1 * t_j), \dots \Phi_k(t_k)) - (1 \leftrightarrow 2).
\end{align}
Now \eqref{eq:Line_2_1} and  \eqref{eq:Line_1_2_2_1} add up to
\begin{equation}
\label{eq:Line_2_1_and_1_2_2_1}
 \text{\eqref{eq:Line_2_1}} + \text{\eqref{eq:Line_1_2_2_1}} = -i \hbar^{-1} [\TO(j(A_2)), \TO(j(A_1))] \star \TO(\Phi_2(t_2), \dots, \Phi_k(t_k)).
\end{equation}
Again using the vanishing of the $D_j$, \eqref{eq:Line_1_1_2} is
\begin{equation}
\label{eq:Line_1_1_2_1}
 \text{ \eqref{eq:Line_1_1_2} } = - \sum_{j=2}^k \Ret(\Phi_2(t_2), \dots, \Phi^{(1)}_j(A_2 * t_j), \dots, \Phi_k(t_k); j(A_1)) - (1 \leftrightarrow 2), 
\end{equation}
where we used the commutativity of the derivative \wrt the background field. Now \eqref{eq:Line_1_1_2_1} and \eqref{eq:Line_1_2_2_2} add up to
\begin{equation*}
 \text{\eqref{eq:Line_1_1_2_1}} + \text{\eqref{eq:Line_1_2_2_2}} = - \sum_{j=2}^k \TO(j(A_1), \Phi_2(t_2), \dots, \Phi^{(1)}_j(A_2 * t_j), \dots, \Phi_k(t_k)) - (1 \leftrightarrow 2),
\end{equation*}
which cancels \eqref{eq:Line_3}. The remaining terms \eqref{eq:Line_1_2_1} and \eqref{eq:Line_2_1_and_1_2_2_1} give
\[
 \text{\eqref{eq:D_k_sym}} = E(A_1, A_2) \star \TO(\Phi_2(t_2), \dots, \Phi_k(t_k)),
\]
which vanishes by the hypothesis.
\end{proof}

Hence, it remains to show that $E=0$, which turns out to be a consequence of current conservation:

\begin{proposition}
\label{prop:E0}
For $n \leq 4$, if $\TO((\bar \delta j)(c)) = 0$, then $E=0$.
\end{proposition}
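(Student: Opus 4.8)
The plan is to use that, by the general analysis preceding \eqref{eq:Def_E}, $E(A_1,A_2)$ is a c-number field which is local and covariant, supported on the total diagonal with wave front set orthogonal to it, antisymmetric under $A_1\leftrightarrow A_2$, and scales almost homogeneously. Since it is a c-number bilinear in $A_1,A_2\in\Gamma^\infty_c(M,E^1)$, its kernel is a local covariant scalar density; as $A_1$ and $A_2$ each carry scaling dimension one and the scaling dimension of $E$ is fixed, at most $n-2$ dimensions may be supplied by covariant derivatives $\bar\nabla$, by the curvatures $\bar F$ and of $\bar g$, or by powers of $\bar m$. For $n\le 4$ this budget is at most two, which leaves only a short, explicit list of candidate expressions.

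First I would enumerate these candidates subject to antisymmetry and to the reality relation $\overline{D_k}=(-1)^{k+1}D_k^*$ specialised to $\Phi_1=j$. The parity-even contractions formed with the Killing form $\kappa$ and the metric $\bar g$ all turn out to be \emph{symmetric} under $A_1\leftrightarrow A_2$ — this holds for $\int\skal{\bar F}{A_1\wedge A_2}$, for the two-derivative terms such as $\int\skal{\bar\ud A_1}{\bar\ud A_2}$, and for the Chern--Simons-type term available in odd dimensions — so that for $n=3$ no antisymmetric candidate survives, while for $n=2,4$ only parity-odd terms built with the Levi-Civita tensor, such as $\int\epsilon^{\mu\nu\rho\sigma}d_{abc}A_{1\mu}^aA_{2\nu}^b\bar F_{\rho\sigma}^c$ in $n=4$, remain.

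The decisive step is to turn current conservation into a constraint on $E$. I claim that $\TO((\bar\delta j)(c))=0$ forces $E(A_1,\bar\ud c)=0$ for all $c\in\Gamma^\infty_c(M,E^0)$. Setting $A_2=\bar\ud c$ in \eqref{eq:Def_E}, the commutator drops since $\TO(j(\bar\ud c))=\TO((\bar\delta j)(c))=0$; the term $\delta_\ret^{\bar\ud c}\TO(\tilde\jmath(A_1))$ equals $\TO(j(\Lie_cA_1))$ with $\Lie_cA_1=\ad_cA_1$, by Lemma~\ref{lemma:deltaMorphism} and Remark~\ref{rem:GaugeTrafo}; and the remaining term $\delta_\ret^{A_1}\TO(\tilde\jmath(\bar\ud c))$ splits, upon comparing the transported fixed test section with the exterior covariant derivative $\bar\ud$ recomputed on the varied background, as $\delta_\ret^{A_1}\TO((\widetilde{\bar\delta j})(c))-\TO(j(\ad_{A_1}c))$. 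The first piece vanishes because $\TO((\bar\delta j)(c))=0$ holds on every background of the family, and the second cancels $\TO(j(\Lie_cA_1))$ because $\ad_{A_1}c=-\ad_cA_1$. Hence $E(A_1,\bar\ud c)=0$, and by antisymmetry $E$ also vanishes when its first argument is exact.

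Finally I would feed this divergence identity back into the candidate list: requiring $E(A_1,\bar\ud c)=0$ for all $A_1$ and $c$ forces the coefficient of every remaining term to vanish, after one integration by parts and use of the Bianchi identity for $\bar F$ (for the $n=4$ term above this produces $\int\epsilon^{\mu\nu\rho\sigma}d_{abc}(\bar\nabla_\nu A_{1\mu})^a c^b\bar F_{\rho\sigma}^c$, which is generically nonzero, so its coefficient must be zero). I expect the main obstacle to be the bookkeeping in splitting $\delta_\ret^{A_1}\TO(\tilde\jmath(\bar\ud c))$ — carefully distinguishing the transport $i_{\bar X_s,\bar X}(\bar\ud c)$ of the fixed test section from the connection-dependent $\bar\ud$ on the perturbed background — since this is exactly where the hypothesis enters and where a sign or a missing term would break the cancellation; by contrast, the dimensional classification and the integration-by-parts arguments are routine once the $n\le 4$ budget is fixed.
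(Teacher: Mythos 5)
Your derivation of $E(A,\bar\ud c)=0$ is exactly the paper's: the commutator in \eqref{eq:Def_E} drops because $\TO(j(\bar\ud c))=\TO((\bar\delta j)(c))=0$, and the two contributions $\TO(j(c\wedge A))$ from $\delta_\ret^{\bar\ud c}\TO(\tilde\jmath(A))$ and $-\TO(j(A\wedge c))$ from the split of $\delta_\ret^{A}\TO(\tilde\jmath(\bar\ud c))$ cancel against each other once the hypothesis kills $\delta_\ret^A\TO((\widetilde{\bar\delta j})(c))$. The gap lies in how you then dispose of the candidate list. Your claim that every parity-even candidate is symmetric under $A_1\leftrightarrow A_2$ is false. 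Two zero-derivative counterexamples: for $G=U(1)$ (explicitly allowed in this paper), $\int \bar F^{\mu\nu}A_{1\mu}A_{2\nu}\,\ud_{\bar g}x$ is parity even, has the correct scaling dimension in $n=4$, and is antisymmetric in $A_1\leftrightarrow A_2$ simply because $\bar F^{\mu\nu}=-\bar F^{\nu\mu}$; for simple $G$ admitting a symmetric invariant tensor $d_{KIJ}$ (e.g.\ $\mathfrak{su}(N)$, $N\geq 3$), the term $\int d_{KIJ}\,\bar F^{K\mu\nu}A^I_{1\mu}A^J_{2\nu}\,\ud_{\bar g}x$ is likewise antisymmetric, since the sign from $\mu\leftrightarrow\nu$ is \emph{not} compensated by $I\leftrightarrow J$. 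Your enumeration only tests the contraction through the bracket (the $f$-symbol), which is indeed symmetric, and so you incorrectly conclude that the parity-even sector is annihilated by antisymmetry alone; consequently your final step never applies the divergence identity to these surviving parity-even terms, and the proof as written does not rule them out.

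The repair is precisely the paper's route: write $E$ in the general derivative-expansion form \eqref{eq:E_A1_A2}, normalize $C$ so that \eqref{eq:C_condition_0} holds, and functionally differentiate $E(A,\bar\ud c)=0$ with respect to $A$ to obtain the constraint \eqref{eq:C_condition_1}, $C^{\mu(\nu\lambda_1\dots\lambda_R)}_{IJ}=0$. For $R=0$ this gives $C^{\mu\nu}_{IJ}=0$ outright and kills both counterexamples above; the cases $R=1,2$ are then excluded by the same dimension counting and symmetry arguments you already use. (Your treatment of the Levi-Civita candidates via the divergence identity is fine and in fact goes slightly beyond the paper, which excludes them by parity evenness and only remarks afterwards that their inclusion would not change the conclusion for $n\leq 4$.)
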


\begin{proof}
Set $A_1 = A$, $A_2 = \bar \ud c$. By Lemma~\ref{lemma:deltaMorphism}, we have
\[
 \delta_\ret^{\bar \ud c} \TO(\tilde \jmath (A)) = \TO(j(c \wedge A)).
\]
Furthermore,\footnote{Note that on the \lhs the background connection in $\bar \delta$ is not varied, contrary to the first term on the r.h.s. This is corrected for by the second term on the r.h.s.}
\[
 \delta_\ret^A \TO(\tilde \jmath(\bar \ud c)) = \delta_\ret^A \TO( (\widetilde{\bar \delta j})(c)) - \TO(j(A \wedge c)).
\]
If $\TO((\bar \delta j)(c)) = 0$, then the first term on the \rhs vanishes. Also the commutator term in \eqref{eq:Def_E} vanishes, so that
\begin{equation}
\label{eq:E_A_dc}
 E(A, \bar \ud c) = 0 \qquad \forall A \in \Gamma_c^\infty(M, E^1), c \in \Gamma_c^\infty(M, E^0).
\end{equation}
On the other hand, from the above we know that $E$ is given by
\begin{equation}
\label{eq:E_A1_A2}
 E(A_1, A_2) = \sum_{r=0}^R \int {A_1}^I_\mu(x) \bar \nabla_{(\lambda_1} \dots \bar \nabla_{\lambda_r)} {A_2}^J_\nu(x) C^{\mu \nu \lambda_1 \dots \lambda_r}_{IJ}(x) \ud_{\bar g} x - (1 \leftrightarrow 2)
\end{equation}
for some locally and covariantly constructed $C^{\mu \nu \lambda_1 \dots \lambda_r}_{IJ}$ of scaling dimension $n$ and a finite $R$. Here $I, J$ are indices labelling a basis of $\g$.
Without loss of generality, we may assume that $C^{\mu \nu \lambda_1 \dots \lambda_r}_{IJ}$ is symmetric in the $\lambda_i$ and that
\begin{equation}
\label{eq:C_condition_0}
 C^{\mu \nu \lambda_1 \dots \lambda_r}_{IJ} = - (-1)^r C^{\nu \mu \lambda_1 \dots \lambda_r}_{JI},
\end{equation}
as, by starting at $r=R$, we may recursively cancel the (anti-) symmetric component for even (odd) $r$ by partial integration. And for $r=0$, the symmetric part cancels anyway. Using \eqref{eq:C_condition_0}, differentiation of \eqref{eq:E_A_dc} \wrt $A_\mu^I(x)$ yields
\begin{multline*}
 \sum_{r=0}^R \bar \nabla_{(\lambda_1} \dots \bar \nabla_{\lambda_r)} \bar \nabla_\nu c^J(x) C^{\mu \nu \lambda_1 \dots \lambda_r}_{I J}(x) \\ + \bar \nabla_{(\lambda_1} \dots \bar \nabla_{\lambda_r)} (\bar \nabla_\nu c^J (x) C^{\mu \nu \lambda_1 \dots \lambda_r}_{I J}(x)) = 0.
\end{multline*}
We may choose $c^J$ such that $\bar \nabla_{\rho_1} \dots \bar \nabla_{\rho_l} c^J(x) = 0$ for all $l \leq R$, and the symmetric parts of $\bar \nabla_{\rho_1} \dots \bar \nabla_{\rho_{R+1}} c^J(x)$ can be chosen independently. Hence,
\begin{equation}
\label{eq:C_condition_1}
 C^{\mu (\nu \lambda_1 \dots \lambda_R)}_{I J} =0. 
\end{equation}
In particular, this already excludes $R=0$.

Let us discuss the possibility to construct such tensors $C$ for dimensions $n \leq 4$.
As we have $2 + r$ upper indices, we need at least $1 + \lceil r/2 \rceil$ inverse metrics\footnote{As the model is parity even, we exclude the completely antisymmetric tensor.} ($\lceil m \rceil$ denoting the smallest number greater or equal to $m$). These have at least scaling dimension $2(1 + \lceil r/2 \rceil)$. Hence, for $n=2$, only $R = 0$ is possible, which we already excluded. The same applies to $n=3$. For $n=4$, $R=0$ is already excluded, and $R=1$ would require the existence of a covariant tensor of rank $3$ and scaling dimension $4$, which does not exist.
Hence, $R=2$ remains. But due to the scaling condition, only the Killing form $\kappa_{I J}$ and $g^{\mu \nu}$ can be used. Hence, the most general form of $C$ is
\[
 C^{\mu \nu \lambda_1 \lambda_2}_{I J} = \kappa_{I J} \left( c_1 g^{\mu \nu} g^{\lambda_1 \lambda_2} + c_2 g^{\mu \lambda_1} g^{\nu \lambda_2} + c_3 g^{\mu \lambda_2} g^{\nu \lambda_1} \right).
\]
Now \eqref{eq:C_condition_0} requires $c_1 = 0$ and $c_2 = -c_3$. On the other hand, symmetry in the $\lambda$'s requires $c_2 = c_3$, so $C=0$.
\end{proof}

\begin{remark}
In the case of metric variations, it is claimed in \cite[Section~6.2.6]{HollandsWaldStress} that the analogous statement is true independently of the dimension. However, the proof seems to contain a gap,\footnote{In the absence of coupling constants of negative mass dimension, one can invoke dimensional arguments to fix this gap for $n \leq 4$, analogously to the above.} related to the fact that in \eqref{eq:C_condition_1}, we can only make statements about the symmetrization in $(\nu \lambda_1 \dots \lambda_R)$ instead of symmetrization in $(\lambda_1 \dots \lambda_R)$. A thorough investigation of the ``background cohomology'' introduced in \cite{HollandsWaldStress} might be helpful in overcoming the restriction on the dimension.
\end{remark}




It now remains to redefine the time-ordered product in order to achieve $D_k = 0$. Assume that $G$ is simple. We may also assume that $\rho$ is nontrivial, as otherwise all requirements are trivially fulfilled. In that case, the quadratic form $\skal{\lambda}{\lambda'} = \tr_V(\rho(\lambda) \rho(\lambda'))$ on $\g$ is a multiple of the Killing form $\kappa$. Hence, we may set
\[
 r^{k+1}[ \psi^+_{\alpha a} \psi^{\beta b}, \Phi_1, \dots, \Phi_k] = i c'_{\rho,n} D_k(\Phi_1, \dots, \Phi_k)_{\mu I} \kappa^{I J} {T_J}_a^b {\gamma^\mu}_\alpha^\beta,
\]
where $\alpha, \beta$ are spinor indices, $a, b$ gauge indices, $I, J$ Lie algebra indices, $T_J$ the generator in the representation $\rho$, and $\kappa^{I J}$ is the inverse of the Killing form. $c'_{\rho, n}$ is an appropriate normalization factor which depends on the representation $\rho$ and the dimension $n$. Here we used the symbolic notation employed in \eqref{eq:CurrentSuccinct}. By the above, we have fulfilled {\bf Starting element}, {\bf Symmetry}, {\bf Support}, {\bf Scaling}, {\bf Source term}, {\bf Unitarity}, {\bf Microlocal spectrum condition}, {\bf Smoothness}, and {\bf Analyticity}, \cf Remark~\ref{rem:Renormalization}. {\bf Expansion} can be fulfilled by adapting the $r^{k+1}$ for higher grades, \cf \cite{HollandsWaldStress}. The same procedure applies for $G = U(1)$, where $\kappa$ above is replaced by $\kappa(\lambda, \lambda') = \lambda \lambda'$ with $\lambda, \lambda' \in i \R$. Hence, we have shown the following:

\begin{proposition}
Let $n\leq 4$ and $G = G_1 \times \dots \times G_r \times U(1)^l$, where the $G_i$ are simple. Then the time-ordered products can be defined such that \eqref{eq:PPA_gauge} holds.
\end{proposition}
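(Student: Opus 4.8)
The plan is to prove the statement by induction on the total grade $N = \sum_i \betrag{\Phi_i}$ of the fields $\Phi_i$, showing that at each stage the time-ordered products may be redefined so that the deviation $D_k$ of \eqref{eq:defD} is made to vanish. Since \eqref{eq:PPA_gauge} is equivalent to $D_k \equiv 0$ for all $k$, this yields the proposition. The base case $N=0$ is immediate, as only $k=0$ contributes and $D_0=0$ identically. For the inductive step I would assume $D_k=0$ for all configurations of total grade strictly less than $N$ and then absorb the remaining grade-$N$ obstruction into a finite renormalization of the time-ordered products of the type permitted by Remark~\ref{rem:Renormalization}. Because $G = G_1 \times \dots \times G_r \times U(1)^l$, the Lie algebra $\g$ and hence the current $j$ split along the factors, so it suffices to carry out the construction separately for one simple factor and for one $U(1)$ factor; factors on which $\rho$ acts trivially contribute nothing and are handled trivially.

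The decisive point is that, under the inductive hypothesis, $D_k$ already has exactly the structure of a renormalization ambiguity. Collecting the facts assembled above: $D_k(A,t_1,\dots,t_k)$ is a c-number, linear in $A$, supported on the total diagonal $\diag_{k+1}$, locally and covariantly constructed, at least of first order in $\hbar$, scales almost homogeneously, vanishes whenever one $\Phi_i$ is linear, and obeys $\overline{D_k} = (-1)^{k+1} D_k^*$. Moreover the wave front conditions \eqref{eq:D_WF_1} and \eqref{eq:D_WF_2} hold, the nontrivial part following from Lemma~\ref{lemma:Smoothness} together with \eqref{eq:WF_s}. The one further property needed is that $D_k$ be symmetric under exchanging the current with the remaining entries; by the reduction established in the proposition above this is equivalent to $E=0$ (the lower-grade vanishing required there being supplied by the inductive hypothesis), which is precisely Proposition~\ref{prop:E0}, valid for $n\leq 4$ provided $\TO((\bar\delta j)(c))=0$. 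This current conservation is ensured by a suitable choice of the parametrix (\cf \cite{LocCovDirac} and Proposition~\ref{prop:WI}). These two inputs --- current conservation and the dimension bound $n\leq 4$ --- are exactly where the hypotheses of the proposition are consumed.

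With these properties in hand I would define the correcting renormalization map. For $G$ simple and $\rho$ nontrivial, the invariant form $\skal{\lambda}{\lambda'} = \tr_V(\rho(\lambda)\rho(\lambda'))$ is a nonzero multiple of the Killing form $\kappa$, so $\kappa$ is invertible on $\g$ and one may set
\[
 r^{k+1}[\psi^+_{\alpha a} \psi^{\beta b}, \Phi_1, \dots, \Phi_k] = i c'_{\rho,n} D_k(\Phi_1, \dots, \Phi_k)_{\mu I} \kappa^{IJ} {T_J}_a^b {\gamma^\mu}_\alpha^\beta,
\]
with $c'_{\rho,n}$ fixed so that the induced change of the $(k+1)$-fold product $\TO(\Phi_1(t_1),\dots,\Phi_k(t_k),j(A))$ hidden in $\Ret$ via \eqref{eq:RetIdentity} exactly cancels $D_k(A,t_1,\dots,t_k)$; the $U(1)$ factors are treated identically with $\kappa$ replaced by the standard form on $i\R$, and the block structure of $\tr_V(\rho(\cdot)\rho(\cdot))$ along the factors makes $\kappa^{IJ}$ well defined on all of $\g$. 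I would then verify that this $r^{k+1}$ meets every requirement of Remark~\ref{rem:Renormalization}: \textbf{Symmetry} from the symmetry of $D_k$ in the current entry, \textbf{Support} and localization on the diagonal from the support of $D_k$, \textbf{Scaling} from almost homogeneous scaling, the \textbf{Microlocal}, \textbf{Smoothness} and \textbf{Analyticity} conditions on the kernel $C_k$ from \eqref{eq:D_WF_1}--\eqref{eq:D_WF_2}, \textbf{Unitarity} from the hermiticity relation, \textbf{Source term} from the vanishing of $D_k$ on linear entries, and \textbf{Starting element} from $r^1$ vanishing on c-numbers; the bookkeeping constraints (order in $\hbar$, even grade lowering) are immediate.

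The step I expect to be the main obstacle is reconciling this redefinition with the \textbf{Expansion} axiom \eqref{eq:TO_Expansion} across all grades at once: the maps $r^{k+1}$ introduced at grade $N$ must be chosen compatibly with those already fixed at lower grades, which forces a consistent adaptation of the higher-grade renormalizations, exactly as in \cite{HollandsWaldStress}. Once this consistency is secured, the redefined time-ordered products satisfy $D_k=0$ at total grade $N$, the induction closes, and \eqref{eq:PPA_gauge} holds for all fields, proving the proposition.
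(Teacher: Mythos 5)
Your proposal is correct and follows essentially the same route as the paper: induction on the total grade, establishing that $D_k$ has all the properties of an admissible renormalization ambiguity (with the symmetry in the current entry reduced to $E=0$ via Proposition~\ref{prop:E0} and current conservation, which is where $n\leq 4$ and the choice of parametrix enter), and then absorbing $D_k$ into $r^{k+1}$ via the Killing form, factor by factor for the simple and $U(1)$ components. The explicit formula for $r^{k+1}$, the checklist against Remark~\ref{rem:Renormalization}, and the deferral of \textbf{Expansion} to an adaptation of the higher-grade maps all coincide with the paper's argument.
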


\begin{remark}
For chiral models, this does in general not hold, as it may not be possible to define the parametrix such that \eqref{eq:delta_j} holds, \cf \cite{LocCovDirac}. If, however, the gauge group and representation is such that \eqref{eq:delta_j} can be fulfilled, then also the principle of perturbative agreement can be fulfilled, as at no other place we made any use of the fact that we were dealing with Dirac instead of chiral fields.\footnote{In the proof of Lemma~\ref{lemma:Smoothness}, we used the Dirac equation, but a similar argument should hold for any wave equation with a well-posed Cauchy problem. Furthermore, in the proof of Proposition~\ref{prop:E0}, we excluded the antisymmetric tensor for simplicity, but at least for $n \leq 4$ its inclusion would not change the conclusion.}
\end{remark}

\subsection*{Acknowledgments}

This work was initiated during a stay at the Hausdorff Institute for Mathematics, Bonn, under the program ``Mathematical Physics''. I would like to thank the Hausdorff Institute for hospitality and the program initiators Wojciech Dybalski,  Katarzyna Rejzner, Jan Schlemmer, and Yoh Tanimoto for the kind invitation and for stimulating discussions. Furthermore, discussions with Stefan Hollands, Kartik Prabhu and Harold Steinacker are gratefully acknowledged.

This work was supported by the Austrian Science Fund (FWF) under the contract P24713.

\appendix

\section{A smoothness result}
\label{app:Smoothness}

\begin{proof}[Proof of Lemma~\ref{lemma:Smoothness}]
As $\omega_s$ is a bi-solution, we have
\begin{align*}
 d_s(D^\oplus_s u_1, u_2) & = - H_s(D^\oplus_s u_1, u_2) = G^1_s(u_1, u_2), \\
 d_s(u_1, D^\oplus_s u_2) & = - H_s(u_1, D^\oplus_s u_2) = G^2_s(u_1, u_2), \\
 d_s(D^\oplus_s u_1, D^\oplus_s u_2) & = - H_s(D^\oplus_s u_1, D^\oplus_s u_2) = G^3_s(u_1, u_2).
\end{align*}
As $H_s$ is a bi-solution modulo $C^\infty$, the $G^i_s$ are smooth in $x_1, x_2$. From the Hadamard recursion relations that determine $H$ up to smooth terms, \cf \cite{LocCovDirac}, it follows that they are jointly smooth in $(s, x_1, x_2)$. Furthermore, if $K$ is the support of the perturbation, then $d_s$ is independent of $s$ on any geodesically convex open set which does not intersect $J^+(K)$, by construction. Now choose two Cauchy surfaces $S_\pm$ of $U$ such that $J^\pm(K) \cap S^\mp = \emptyset$. For $u_i \in \D^\oplus$, supported in $N \doteq J^+(S^-) \cap J^-(S^+)$, consider
\[
 d(u_1, u_2) = d(\chi_N u_1, \chi_N u_2) = d(\chi_N D^\oplus S^\oplus_\adv u_1, \chi_N D^\oplus S^\oplus_\adv u_2),
\]
where $\chi_N$ is the characteristic function of $N$ and $d, D^\oplus, S^\oplus_\adv$ depend on $s$. Using integration by parts, we obtain
\begin{multline}
\label{eq:d_Integral}
 d(u_1, u_2) = G^3(\chi_N S^\oplus_\adv u_1, \chi_N S^\oplus_\adv u_2) + G^1(\chi_N S^\oplus_\adv u_1, \delta_{S^-} n \cdot \gamma S^\oplus_\adv u_2) \\
 + G^2(\delta_{S^-} n \cdot \gamma S^\oplus_\adv u_1, \chi_N S^\oplus_\adv u_2) + d(\delta_{S^-} n \cdot \gamma S^\oplus_\adv u_1, \delta_{S^-} n \cdot \gamma S^\oplus_\adv u_2),
\end{multline}
where $\delta_{S^-}$ denotes the restriction of the integration to $S^-$, and $n$ is the corresponding normal vector.
As $d_s$ is jointly smooth in $(x_1, x_2)$ for fixed $s$ and independent of $s$ in a neighborhood of $S^- \times S^-$, the wave front set of $d$ (as a function of $(s, x_1, x_2)$) has no intersection with $S^- \times S^-$. It now follows from
\[
 \WF(S^{\oplus, s}_\adv) \subset \{ (s, t; x_1, k_1; x_2, k_2) | (x_1, k_1) \sim_s (x_2, - k_2), x_1 \in J^{-}_s(x_2) \}
\]
that $\WF(\chi_N(x_1) S_\adv^\oplus(x_1,x_2))$ does not contain elements with $k_1=0$, so with
the smoothness of $G^i$ and the wave front set calculus \cite{HoermanderI} we conclude that $\WF(d)$ is empty, so $d$ is jointly smooth in $(s, x_1, x_2)$.

In order to restrict the wave front set of $\delta d$, we first determine the wave front set of $\delta G_i$ (which is defined analogously to $\delta d$). Explicit calculation shows that $G_i(x_1, x_2)$ is a series in Hadamard coefficients $V_k(x_1, x_2)$, possibly acted upon with $D^\oplus_1$ or $D^\oplus_2$, and multiplied with nonnegative powers of the squared geodesic distance $\Gamma(x_1, x_2)$. For both $V_k$ and $\Gamma$, one has to integrate the geometric data along the unique geodesic connecting $x_1$ and $x_2$. Hence, for the wave front set of $\delta \Gamma$, $\delta V_k$, we have to determine the wave front set of
\[
 \int_0^1 \delta(y, z_{x_1, x_2}(s)) \ud s,
\]
where $z_{x_1, x_2}: [0,1] \to M$ is the unique geodesic from $x_1$ to $x_2$. To do so, consider the wave front set of $\delta(y, z_{x_1, x_2}(s))$ as a distribution in $(s, y, x_1, x_2)$ and then use the wave front set calculus to determine the wave front set of its convolution with $\vartheta(s) \vartheta(1-s)$, $\vartheta$ being the Heaviside distribution. We obtain, by employing Riemannian normal coordinates,
\begin{align*}
 \WF(\delta V_k) & \subset W, & \WF(\delta \Gamma) & \subset W,
\end{align*}
with
\begin{align*}
 W & \doteq \{ (y,p;x_1,k_1;x_2,k_2) | y = z_{x_1, x_2}(s), 0 \leq s \leq 1, p(\dot z_{x_1, x_2}(s)) = 0, \\
 & \qquad \qquad k_1 = (s-1) \Pi_z p, k_2 = - s \Pi_z p \} \\
 & \cup \{ (y,p;x_1,k_1;x_2,k_2) | y = x_1, p = - k_1, k_2 = 0 \} \\
 & \cup \{ (y,p;x_1,k_1;x_2,k_2) | y = x_2, p = - k_2, k_1 = 0 \},
\end{align*}
where $\Pi_z$ denotes the parallel transport of the cotangent vector along $z$. The wave front set of the $\delta$ distribution $\delta(y-x_i)$ that is implied by $\delta D^\oplus_i V_k(x_1, x_2)$ (as the geometric data at $x_i$ is contained in $D^\oplus_i$) is already contained in $W$ (as the second and the third component), so that\footnote{This corrects equation (230) in \cite{HollandsWaldStress}.}
\begin{equation}
\label{eq:WF_delta_G}
 \WF(\delta G_i) \subset W.
\end{equation}
Furthermore, from $\delta S^\oplus_\adv = - S^\oplus_\adv \circ \delta D^\oplus \circ S^\oplus_\adv$, we have
\begin{multline*}
 \WF(\delta S^\oplus_\adv) \subset \{ (y,p;x_1,k_1;x_2,k_2) | y \in J^+(x_1) \cap J^-(x_2), \\
\exists q_i \in T^*_y M \text{ such that } (y,q_i) \sim (x_i, - k_i), p = q_1 + q_2 \}.
\end{multline*}
Combining this with \eqref{eq:WF_delta_G} and \eqref{eq:d_Integral}, noting that $\delta d=0$ on $S^- \times S^-$, and using the wave front set calculus, we obtain \eqref{eq:WF_delta_d}.
\end{proof}


\end{document}